\newtheorem{theorem}{Theorem}
\newtheorem{proposition}{Proposition}
\newtheorem{definition}{Definition}
\DeclareMathOperator*{\argmin}{arg\,min}
\title{\LARGE Optimizing Information Freshness in IoT Systems with Update Rate Constraints: A Token-Based Approach}
\author{
	\IEEEauthorblockN{Erfan Delfani\IEEEauthorrefmark{1}, and Nikolaos Pappas\IEEEauthorrefmark{1}\\} 
	\IEEEauthorblockA{\IEEEauthorrefmark{1}Department of Computer and Information Science, Link\"{o}ping University, Link\"{o}ping, Sweden}
	Email: \{erfan.delfani, nikolaos.pappas\}@liu.se
	\thanks{This work has been supported in part by the Swedish Research Council (VR), ELLIIT, and the European Union (ETHER, 101096526, ELIXIRION, 101120135, and SOVEREIGN, 101131481).}
	\thanks{ISBN 978-3-903176-63-8 $\copyright$ 2024 IFIP}
}
\begin{document}
	
	\setlength{\abovecaptionskip}{1pt}
	\setlength{\belowcaptionskip}{-10pt} 
	\setlength{\textfloatsep}{0pt}
	
	\maketitle
	
	\begin{abstract}
		In Internet of Things (IoT) status update systems, where information is sampled and subsequently transmitted from a source to a destination node, the imperative necessity lies in maintaining the timeliness of information and updating the system with optimal frequency. Optimizing information freshness in resource-limited status update systems often involves \emph{Constrained Markov Decision Process (CMDP)} problems with \emph{update rate constraints}. Solving CMDP problems, especially with multiple constraints, is a challenging task. To address this, we present a \emph{token-based approach} that transforms CMDP into an unconstrained MDP, simplifying the solution process. We apply this approach to systems with one and two update rate constraints for optimizing \emph{Age of Incorrect Information (AoII)} and \emph{Age of Information (AoI)} metrics, respectively, and explore the analytical and numerical aspects. Additionally, we introduce an \emph{iterative triangle bisection} method for solving the CMDP problems with two constraints, comparing its results with the token-based MDP approach. Our findings show that the token-based approach yields superior performance over baseline policies, converging to the optimal policy as the maximum number of tokens increases.
	\end{abstract}
	
	\section{Introduction}
	Prudent control of status updates to provide timely information exchange in resource-limited IoT networks has emerged as an important research direction within semantics-aware goal-oriented communication. In these systems, \emph{update packets} from an information source are generated and transmitted through a communication network to a destination node for subsequent processing and utilization. This process fulfills application-driven goals, including remote actuator control in industrial automation, steering autonomous vehicles in intelligent transportation systems, and monitoring health and environmental status\cite{kaul2012real,abd2019role, pappas2023age}.
	In such time-sensitive systems, the reliability and accuracy of controls depend on the timeliness of information. Therefore, there is a high demand for the \emph{freshness of information} and, consecutively, for frequent generation of updates. However, due to resource limitations arising from factors such as energy limitations, channel accessibility, or processing capacity, it is crucial to optimize the timing and the amount of generated, transmitted, or processed data within a network. This results in problems addressing the optimization of information freshness under updating rate constraints, aiming to find \emph{optimal policies} that answer the question of \emph{when} to generate or transmit update packets. In the literature of status update systems, the quantification of information freshness is accomplished through timing metrics, such as Age of Information (AoI) \cite{kaul2012real}, Age of Incorrect Information (AoII) \cite{maatouk2020age}, Version AoI (VAoI) \cite{yates2021age}, and Age of Actuation (AoA) \cite{nikkhah2023age}, each metric addressing distinct aspects of timing and importance of information.
	
	Below, we present and discuss papers that consider the optimization of information freshness in status update systems under rate constraints. \cite{ceran2019average} minimizes average AoI by scheduling updates over an error-prone channel with transmission constraints, studying optimal policies under various feedback mechanisms. \cite{wang2019minimizing} addresses minimizing average AoI under transmission power constraints, transforming the problem into an unconstrained MDP through Lagrangian relaxation. \cite{maatouk2020age} introduces AoII metric and solves the CMDP problem with a Lagrangian approach. \cite{tang2020minimizing} focuses on fresh data collection from power-constrained sensors in IIoT networks, optimizing average AoI with scheduling algorithms and LP-based optimization. \cite{chen2023minimizing} minimizes AoI in wireless networks with peak power-constrained base stations, using CMDP with strict and relaxed power constraints and proposing a truncated multi-user scheduling policy. \cite{hatami2022demand} minimizes average AoI in resource-constrained IoT networks using a CMDP model and Lagrangian multipliers, offering an asymptotically optimal, low-complexity algorithm. 
	The works mentioned above primarily rely on formulating the constrained optimization problem using the CMDP framework and subsequently proceed to solve it through a Lagrangian approach.
	
	However, solving a CMDP problem through the direct primary formulation or the Lagrangian dual approach presents a formidable challenge, particularly for problems with multiple constraints\cite{altman1999constrained}. To address this challenge, we present a \emph{token-based approach} for transforming the CMDP problem into an unconstrained MDP problem, following the same approach in \cite{stamatakis2023optimizing}. The resulting MDP can be directly solved using popular iterative standard methods. Specifically, we convert the constrained problem into an unconstrained problem by defining new variables within the system model. These variables are defined in such a way as to ensure the update rate constraints. To elaborate, drawing inspiration from the well-known \emph{token bucket} mechanism, a commonly referenced concept in the literature \cite{tanenbaum2011computer,raghunathan2004energy,chiariotti2022query,stamatakis2023optimizing}, we allocate a specific number of tokens to the system in each time slot (or assign one token with a specific probability) for potential updates. When the tokens accumulate sufficiently, the system can decide whether to spend them and update or refrain from updating. This way, the system can optimize update actions by spending tokens at the right time by solving an unconstrained MDP.
	For single rate constraint problems, this approach resembles the idea of modifying the system model and incorporating an Energy Harvesting (EH) sensor with a random energy arrival for sampling and status updating \cite{yates2015lazy,bacinoglu2017scheduling,stamatakis2019control,chen2021optimizing,gindullina2021age,hatami2021aoi,delfani2023version}. However, as we will demonstrate in subsequent sections by applying it to two rate constrained problems, this token-based approach can be effectively applied to general system models, without being limited to EH scenarios. Our contributions are as follows:
	\begin{itemize}[leftmargin=0.12in]
		\item We present a token-based unconstrained MDP for optimizing AoII in a status update system derived from a single-rate constrained MDP and provide analytical results.
		\item We formulate a two-rate constrained MDP and its corresponding token-based unconstrained MDP for optimizing AoI in a status update system. Analytical results for the unconstrained problem are presented.
		\item We introduce an iterative triangle bisection method for solving the CMDP with two constraints.
		\item Finally, we compare the performance of the optimal token-based policy in different setups with that of the optimal policy for the primary CMDP and other baseline policies through numerical analysis.
	\end{itemize}
	
	\section{System Model}
	\label{SystemModelSection}
	We consider a status update system depicted in Fig. \ref{fig:GeneralSystemModel}, where update packets/samples originating from an information source are generated and transmitted from a source node (Tx) to a destination node (Rx) within a network, by an update policy. This policy dictates when the update packets are transmitted to the destination nodes, and our objective is to ascertain the optimal policy that maximizes performance by optimizing information freshness within the system while adhering to maximum allowable update rates. In this system model, the update packets are forwarded through a channel, which may be reliable or unreliable. Additionally, we integrate a channel from the receiver to the transmitter, which can be utilized to either transmit acknowledgment feedback (in a \emph{push-based} scenario) or request a new update (in a \emph{pull-based} scenario), as in \cite{agheli2023effective} and \cite{stamatakis2024semantics}. We presume the backward channel to be error-free and instantaneous. The proposed token-based approach is versatile and can be applied to optimize various information freshness metrics across diverse system configurations. In what follows, we investigate the CMDP and token-based problems to optimize AoII and AoI in two distinct system setups under one and two update rate constraints, respectively, in Sections \ref{SingleRateSection} and \ref{TwoRateSection}. 
	AoI is a performance metric that quantifies the freshness of information, defined as the time elapsed since the generation time of the last successfully received update at the destination node. AoII quantifies the time elapsed since the latest instance when the source collected a sample with identical content to the current sample stored at the destination. In this system model, we assume that the time is slotted and the slots are of equal duration.
	\begin{figure}[t!]
		\centering
		\includegraphics[width=3.4in]{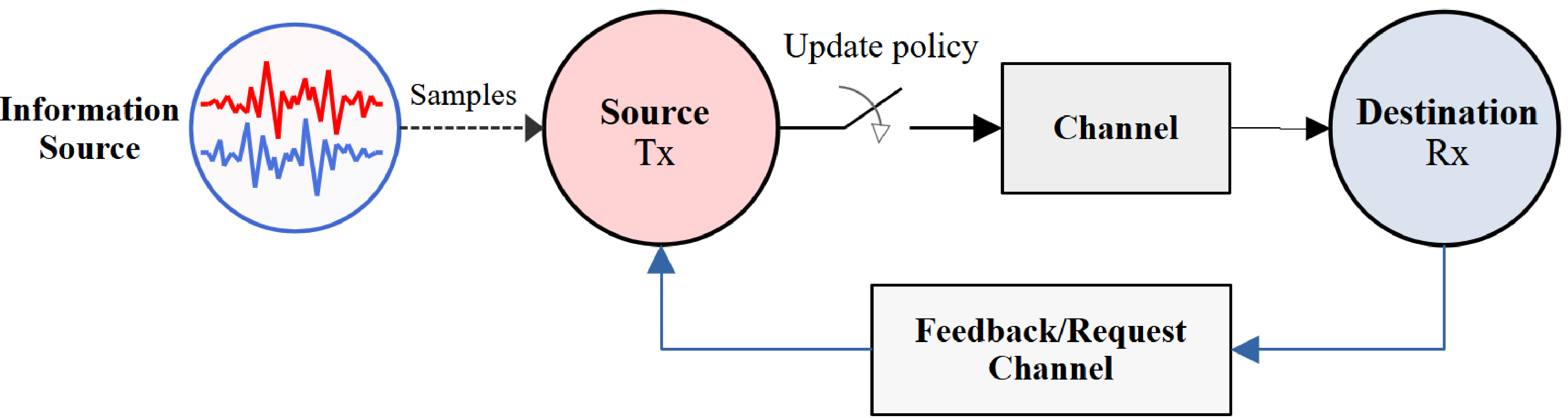}
		\caption{A general system model for optimizing information freshness under update rate constraints.}
		\label{fig:GeneralSystemModel}
		\vspace{10pt}
	\end{figure}
	
	\section{A system with single rate constraint}
	\label{SingleRateSection}
	
	\subsection{Problem Formulation}
	
	In a status update system, as depicted in Fig. \ref{fig:GeneralSystemModel}, where the objective is to optimize the time average of a freshness metric or a \emph{cost function} $C\big(s\left(t\right),a\left(t\right)\big)$ subject to an update rate constraint, a CMDP problem can be defined as follows:
	\begin{align}
		\label{CMDP_eqn}
		\min_{\pi \in \Pi} \ &\limsup_{T\rightarrow\infty} {\frac{1}{T} E\left[ \sum_{t=0}^{T-1} C\big( s^\pi(t),a^\pi(t) \big) \Big| s(0) \right]}, \\
		\text{s.t.} \ &\limsup_{T\rightarrow\infty} {\frac{1}{T} E\left[ \sum_{t=0}^{T-1} a^\pi(t) \Big| s(0) \right]} \leq \alpha, \notag
	\end{align}
	where the \emph{policy} $\pi$ is a sequence of \emph{actions}, i.e., $\pi\overset{\text{def}}{=}\big(a^\pi(0),a^\pi(1),a^\pi(2),\cdots\big)$, where $a^\pi(t) \in \{0,1\}$ represents the action at time $t$. Here, $a^\pi(t)=1$ represents the \emph{update} action, whereas $a^\pi(t)=0$ indicates the \emph{no-update} action. The term \emph{update} refers to the process of generating and transmitting information from the source to the destination. $s^\pi(t)$ is the \emph{state} of the system under the policy $\pi$, including a timeliness or freshness variable. In this formulation, the parameter $0 < \alpha < 1$ restricts the average update rate. We omit the superscript $\pi$ in our analysis to facilitate presentation.
	
	To transform this CMDP problem into the token-based MDP one, we define a \emph{new state vector} $\hat{s}(t)\overset{\text{def}}{=}[b(t),s(t)]$, where $b(t) \in \{0,1,2,\ldots,b_{max}\}$ is a token variable evolving as:
	\begin{align}
		\label{bEvol_eqn}
		b(t+1)=
		\begin{cases}
			b(t) - a(t) + 1 & \text{with probability } \alpha, \\
			b(t) - a(t) & \text{with probability } 1-\alpha, \\
		\end{cases}
	\end{align}
	where we have assumed that each update consumes one token. The action $a(t)$ must be set to zero (indicating no update) when $b(t)$ is equal to zero. Otherwise, when $b(t)>0$, the system will decide on the optimal action at each time slot. This optimal policy can be obtained by solving the following \emph{token-based unconstrained MDP} problem:
	\begin{align}
		\label{UncMDP_eqn}
		\min_{\pi \in \Pi}\ \limsup_{T\rightarrow\infty} {\frac{1}{T} E\Bigg[ \sum_{t=0}^{T-1} C\Big( \hat{s}^\pi(t),a^\pi(t) \Big) \Big| \hat{s}(0) \Bigg]},
	\end{align}
	where the cost function is the same as the CMDP problem \eqref{CMDP_eqn}, and the transition probabilities of this new MDP problem can be simply derived using the transition probability of the main CMDP problem, via the following equation:
	\begin{multline}
		\label{eqn:NewStateVecTrans}
		P \big[\hat{s}(t\!+\!1) \big| \hat{s}(t),a(t) \big]\!=\!P \big[b(t\!+\!1), s(t\!+\!1) \big| b(t), s(t),a(t) \big]  \\
		= \underbrace{P \big[b(t+1) \big| b(t),a(t) \big]}_{\text{Can be derived from \eqref{bEvol_eqn}}} \times \underbrace{P \big[s(t+1) \big| s(t),a(t) \big]}_{\text{Given by CMDP problem \eqref{CMDP_eqn}}}.
	\end{multline}
	
	\subsection{Optimizing AoII under rate constraint} 
	In a scenario where the cost function is defined to be AoII, denoted by $\Delta^{\text{AoII}}(t)$, the CMDP problem \eqref{CMDP_eqn} has previously been addressed for a specific system model \cite{maatouk2020age}. The considered system model focuses on a communication system involving a transmitter-receiver pair where the transmitter sends status updates about a discrete Markov chain process to the receiver over an unreliable channel, as depicted in Fig. \ref{fig:GeneralSystemModel}. The process has $N$ states, and probabilities of staying in the same state ($p_R$) or transitioning to another state ($p_t$) are defined such that $p_R+(N-1)p_t=1$. The unreliable channel follows a Bernoulli distribution with success ($p_s$) and failure ($p_f=1-p_s$) probabilities at each time slot. Successful transmission prompts an ACK feedback, while failure results in a NACK. The transmitter can perfectly estimate the receiver's information source using these packets. The transmitter follows a push-based scenario and generates updates at its discretion by sampling the current process state. The objective is to minimize the time-average AoII with the adopted transmission policy. 
	
	This problem can be formulated as a CMDP presented in \eqref{CMDP_eqn}, where the \emph{state} at time $t$ is characterized by the AoII, $s(t)\overset{\text{def}}{=}\Delta^{\text{AoII}}(t) \in \{0,1,2,\cdots,\Delta_{\text{max}}\}$. The \emph{action} at time $t$, denoted as $a(t) \in \{0,1\}$, signifies an attempted update (value $1$) or remaining idle (value $0$). The \emph{instantaneous cost} is defined as $C\left( s(t),a(t) \right)\overset{\text{def}}{=}\Delta^{\text{AoII}}(t)$, and the \emph{transition probabilities} between states are detailed below:
	
	\begin{subequations}
		\label{eqn:TransitionsAoII}
		\begin{align}
			P&\left[s(t+1)\!=\!0 \big| s(t)\!=\!0,a(t)\right]\!=\!p_R,\ \forall a(t) \in \{0,1\} \\
			P&\left[s(t+1)\!=\!1 \big| s(t)\!=\!0,a(t)\right]\!=\!1\!-\!p_R,\ \forall a(t) \in \{0,1\} \\
			P&\left[s(t+1)\!=\!0 \big| s(t)\neq 0, a(t)\!=\!0\right] \!=\! p_t \\
			P&\left[s(t+1)\!=\! s(t) \!+\! 1 \big| s(t) \neq 0, a(t) \!=\! 0\right] \!=\! 1 \!-\! p_t \\
			P&\left[s(t+1)\!=\!0 \big| s(t)\neq0, a(t)\!=\!1\right]\!=\! \beta \\
			P&\left[s(t+1) \!=\! s(t) \!+\! 1 \big| s(t) \neq 0, a(t) \!=\! 1\right] \!=\! 1 \!-\! \beta,
		\end{align}
	\end{subequations}
	\noindent where $\beta \overset{\text{def}}{=} p_Rp_s + p_fp_t$. For the main problem presented in \cite{maatouk2020age}, the assumption $p_R>p_t$ has been considered to prevent the trivial optimal scenario of \emph{never update}. We consider the same assumption which results in:
	$\beta\!=\!p_Rp_s \!+\! p_fp_t \!>\! p_tp_s \!+\! p_fp_t \!=\! p_t$.
	
	Transforming this CMDP problem into a token-based MDP problem, the transition probabilities for the new state vector $\hat{s}$, i.e., $P\left[\hat{s}^\prime|\hat{s},a\right]=P\left[b^\prime,s^\prime \big| b,s,a\right]$, can now be calculated using \eqref{eqn:NewStateVecTrans}, \eqref{bEvol_eqn} and \eqref{eqn:TransitionsAoII}. 
	
	\begin{itemize}[leftmargin=0.12in]
		\item \textit{Case 1.} $a = 0$ and $s = 0$.
		\begin{align}
			P\left[\hat{s}^\prime|\hat{s},a\right] \!=\!
			\begin{cases}
				\alpha p_R & b^\prime\!=\!b\!+\!1, s^\prime \!=\! 0, \\
				\alpha (1\!-\!p_R) & b^\prime\!=\!b\!+\!1, s^\prime \!=\! 1, \\
				(1\!-\!\alpha)p_R & b^\prime\!=\!b, s^\prime \!=\! 0, \\
				(1\!-\!\alpha)(1\!-\!p_R) & b^\prime\!=\!b, s^\prime \!=\! 1.
			\end{cases} 
		\end{align}
		
		\item \textit{Case 2.} $a = 0$ and $s \neq 0$.
		\begin{align}
			P&\left[\hat{s}^\prime|\hat{s},a\right] \!=\!
			\begin{cases}
				\alpha p_t & b^\prime\!=\!b\!+\!1, s^\prime \!=\! 0, \\
				\alpha \left(1\!-\!p_t\right) & b^\prime\!=\!b\!+\!1, s^\prime \!=\! s\!+\!1, \\
				(1\!-\!\alpha)p_t & b^\prime\!=\!b, s^\prime \!=\! 0, \\
				(1\!-\!\alpha)\left(1\!-\!p_t\right) & b^\prime\!=\!b, s^\prime \!=\! s\!+\!1.
			\end{cases} 
		\end{align}
		
		\item \textit{Case 3.} $a = 1$ and $s = 0$, while $b>0$.
		\begin{align}
			P&\left[\hat{s}^\prime|\hat{s},a\right] \!=\!
			\begin{cases}
				\alpha p_R & b^\prime\!=\!b, s^\prime \!=\! 0, \\
				\alpha (1\!-\!p_R) & b^\prime\!=\!b, s^\prime \!=\! 1, \\
				(1\!-\!\alpha)p_R & b^\prime\!=\!b\!-\!1, s^\prime \!=\! 0, \\
				(1\!-\!\alpha)(1\!-\!p_R) & b^\prime\!=\!b\!-\!1, s^\prime \!=\! 1.
			\end{cases} 
		\end{align}
		
		\item \textit{Case 4.} $a = 1$ and $s \neq 0$, while $b>0$.
		\begin{align}
			P&\left[\hat{s}^\prime|\hat{s},a\right] \!=\!
			\begin{cases}
				\alpha \beta & b^\prime\!=\!b, s^\prime \!=\! 0, \\
				\alpha \left(1\!-\!\beta\right) & b^\prime\!=\!b, s^\prime \!=\! s\!+\!1, \\
				(1-\alpha)\beta & b^\prime\!=\!b\!-\!1, s^\prime \!=\! 0, \\
				(1-\alpha)\left(1\!-\!\beta\right) & b^\prime\!=\!b\!-\!1, s^\prime \!=\! s\!+\!1.
			\end{cases} 
		\end{align}
		
	\end{itemize}
	
	In the following, we present the structural results of the optimal token-based policy, and in Section \ref{SingleRateNumericalSection} we compare its outcomes with the main CMDP solution for the same problem.
	
	\subsection{Analytical Results} 
	We present analytical results regarding the existence and structure of the optimal token-based policy. Here, we omit the \emph{hat} on $\hat{s}$ for the sake of simplicity.
	
	\begin{definition}
		An MDP is weakly accessible (or weakly communicating) if its states can be divided into two subsets, $S_t$ and $S_c$, where states in $S_t$ are transient under any stationary policy, and for any two states $s$ and $s^\prime$ in $S_c$, $s^\prime$ can be reached from $s$ under some stationary policy.
	\end{definition}
	
	\begin{proposition}
		The token-based MDP problem \eqref{UncMDP_eqn} is weakly accessible.
	\end{proposition}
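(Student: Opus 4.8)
The plan is to exhibit a single communicating class containing the whole state space, so that the decomposition in the Definition can be taken with $S_c$ equal to the entire (finite) state space $\{0,\dots,b_{max}\}\times\{0,\dots,\Delta_{max}\}$ and $S_t=\emptyset$. It then suffices to show that for every ordered pair of states $\hat{s}=(b,s)$ and $\hat{s}^\prime=(b^\prime,s^\prime)$ there is a stationary policy under which $\hat{s}^\prime$ is reachable from $\hat{s}$ with strictly positive probability. Because accessibility only asks for one positive-probability sample path, I do not need almost-sure hitting, merely positivity of a finite chain of one-step transitions, which keeps the argument elementary.

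For a fixed target token level $b^\prime$, I would use the threshold policy $\pi_{b^\prime}$ that chooses $a=1$ exactly when $b>b^\prime$ and $a=0$ otherwise (this automatically respects the rule that $a=0$ when $b=0$). Under $\pi_{b^\prime}$ I would construct the path in three phases. First, drive the token count to $b^\prime$: if $b>b^\prime$, a run of update steps in which no token arrives (each of probability $1-\alpha$ times a positive AoII branch) decreases $b$ one unit at a time down to $b^\prime$; if $b<b^\prime$, a run of no-update steps in which a token arrives (probability $\alpha$ times a positive AoII branch) raises $b$ up to $b^\prime$. Second, with $b=b^\prime$ the policy plays $a=0$, so a single step with no arrival together with the AoII-reset branch (probability $(1-\alpha)p_R$ if $s=0$, or $(1-\alpha)p_t$ if $s\neq0$) lands the chain in $(b^\prime,0)$ while freezing the token. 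Third, climb the AoII from $0$ to $s^\prime$ by $s^\prime$ consecutive no-update, no-arrival steps, of probability $(1-\alpha)(1-p_R)$ for the transition $0\to1$ and $(1-\alpha)(1-p_t)$ thereafter, all of which keep $b=b^\prime$ fixed. Concatenating the three phases yields a positive-probability path from $\hat{s}$ to $\hat{s}^\prime$.

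All the one-step probabilities invoked above are strictly positive under the standing assumptions: $0<\alpha<1$ makes both the arrival and no-arrival branches available; $p_R,p_t\in(0,1)$ with $p_R>p_t$ give $1-p_R>0$, $1-p_t>0$, and $p_t>0$; and $\beta=p_Rp_s+p_fp_t$ satisfies $0<p_t<\beta<1$, so the update-branch reset and increment are positive as well. The only boundary cases needing a separate word are $b^\prime=b_{max}$ (the token saturates, so the first phase simply stops at the cap) and $s^\prime=\Delta_{max}$ (the climb ends at the truncation), neither of which obstructs the construction.

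The step I expect to be the crux is the coupling between the token and the AoII dynamics: a stationary policy must commit to a fixed action in each state, so I cannot first tune $b$ and then separately tune $s$ by switching policies over time. The threshold policy $\pi_{b^\prime}$ is precisely what resolves this, since it makes $b^\prime$ a reflecting floor for the token while leaving $a=0$ at level $b^\prime$, which is exactly what is needed to freeze the token and steer the AoII in the second and third phases. Once mutual accessibility of every pair is established in this way, the entire state space is a single communicating class, and the weak-accessibility decomposition holds with $S_t=\emptyset$.
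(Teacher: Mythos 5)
Your proof is correct and takes essentially the same route as the paper: both show the whole state space is a single communicating class (so $S_t=\emptyset$ in the definition) by exhibiting positive-probability paths that first steer the token level to $b^\prime$ and then steer the AoII to its target while the token stays put, relying only on $0<\alpha<1$ and the non-degeneracy of $p_R,p_t$. The sole difference is the realizing policy: the paper invokes a stationary \emph{randomized} policy that assigns positive probability to both actions in every state, so any finite action sequence occurs with positive probability and the ``crux'' you identified (a stationary policy committing to one action per state) simply dissolves, whereas you resolve it with a target-dependent deterministic threshold policy --- both are legitimate under the definition, which only asks for \emph{some} stationary policy per pair of states.
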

	
	\begin{proof}
		We demonstrate that any state $s^\prime=\left(b^\prime,\Delta^{\text{AoII}^\prime}\right)=\left(b^\prime,\Delta^\prime\right) \in \hat{S}$ is reachable from any other state $s=\left(b,\Delta^{\text{AoII}}\right)=\left(b,\Delta\right) \in \hat{S}$ under a stationary stochastic policy $\pi$, where the action $a\in\{0,1\}$ at each state is randomly selected with a positive probability.
		The state $b^\prime<b$ is accessible from $b$ with a positive probability (w.p.p.) by executing action $a=1$ for $(b-b^\prime)$ time slots, and $b^\prime \geq b$ is reachable from $b$ w.p.p. by realizing action $a=0$ for $(b^\prime-b)$ slots. Upon reaching the state $b^\prime$, irrespective of subsequent actions, the state of the token variable can remain unchanged w.p.p. Consequently, for the remainder of the proof, we consider the token state to be $b^\prime$.
		The state $\Delta^\prime<\Delta$ is also attainable from $\Delta$ w.p.p. by executing action \(a=1\) for one time slot, followed by executing action $a=0$ for $\Delta^\prime$ slots. Meanwhile, the state $\Delta^\prime \geq \Delta$ is accessible from $\Delta$ by executing action $a=0$ for $\Delta^\prime-\Delta$ slots. 
	\end{proof}
	
	\begin{proposition}
		In the token-based MDP problem \eqref{UncMDP_eqn}, the optimal average cost $J^\ast$ achieved by an optimal policy $\pi^\ast$ is the same for all initial states, and it satisfies the Bellman’s equation:
		\vspace{-5pt}
		\begin{equation}
			\label{Bellman_eqn}
			J^\ast\!+\!V(s)\!=\!\min_{a\in\left\{0,1\right\}}{\!\bigg\{\!\sum_{s^\prime\in \hat{S}}{P\left(s^\prime \big | s,a\right)\!\big[C(s,a) \!+\! V(s^\prime)\big]}\!\bigg\}},
			\vspace{-5pt}
		\end{equation}
		\begin{equation}
			\label{OptimalAction_eqn}
			\pi^\ast(s)\! \in \! \argmin_{a\in\left\{0,1\right\}}{\!\bigg\{\!\sum_{s^\prime\in \hat{S}}{P\left(s^\prime \big | s,a\right)\!\big[C(s,a)\!+\!V(s^\prime)\big]}\!\bigg\}}.
		\end{equation}
		where $V(s)$ denotes the value function of the MDP problem.
	\end{proposition}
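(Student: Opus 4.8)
The plan is to invoke the standard average-cost dynamic programming theory for finite MDPs, using weak accessibility as the crucial hypothesis. First I would verify the finiteness and boundedness conditions: the augmented state space $\hat{S}=\{0,1,\ldots,b_{max}\}\times\{0,1,\ldots,\Delta_{\text{max}}\}$ is finite because the AoII is capped at $\Delta_{\text{max}}$ and the token count at $b_{max}$, the action set $\{0,1\}$ is finite, and the instantaneous cost $C(\hat{s},a)=\Delta^{\text{AoII}}$ is bounded by $\Delta_{\text{max}}$. Together with Proposition 1, which establishes that the MDP is weakly accessible, these are exactly the hypotheses under which the average-cost optimality equation is known to admit a solution with a constant optimal cost (see, e.g., Bertsekas, \emph{Dynamic Programming and Optimal Control}, Vol.~II).

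From this, both claims follow at once from the weakly-communicating average-cost optimality theorem: for a finite weakly accessible MDP there exist a scalar $J^\ast$ and a bounded differential-cost function $V(\cdot)$ satisfying \eqref{Bellman_eqn}, the scalar $J^\ast$ equals the optimal average cost and is independent of the initial state, and any stationary policy attaining the minimization in \eqref{OptimalAction_eqn} at every state is average-cost optimal. So the weak-accessibility result of Proposition 1 does essentially all the work.

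To make the argument self-contained I would establish the existence of $(J^\ast,V)$ via the vanishing-discount approach. Fix a reference state $s_0$ and let $V_\lambda$ denote the optimal value function of the $\lambda$-discounted version of \eqref{UncMDP_eqn}, and set the relative value $h_\lambda(s)=V_\lambda(s)-V_\lambda(s_0)$. The discounted Bellman equation, after subtracting $V_\lambda(s_0)$, reads $(1-\lambda)V_\lambda(s_0)+h_\lambda(s)=\min_{a}\{C(s,a)+\lambda\sum_{s^\prime}P(s^\prime|s,a)\,h_\lambda(s^\prime)\}$. Taking a sequence $\lambda\uparrow 1$ along which $(1-\lambda)V_\lambda(s_0)\to J^\ast$ and $h_\lambda\to V$ pointwise, and passing to the limit, yields \eqref{Bellman_eqn}; finiteness of the limit at every state then gives the same optimal cost $J^\ast$ for all initial states.

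The main obstacle is showing that the relative values $h_\lambda$ stay uniformly bounded as $\lambda\uparrow 1$, which is what guarantees a convergent subsequence; the subsequent limit-passing is routine. This is precisely where weak accessibility (Proposition 1) enters: since every state communicates with $s_0$ under some stationary policy, one can bound $|h_\lambda(s)|$ by the bounded expected cost incurred by a policy that drives the chain from $s$ to $s_0$, uniformly in $\lambda$. I expect verifying this uniform bound, rather than the limit-passing itself, to be the only nontrivial step.
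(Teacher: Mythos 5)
Your proposal is correct and takes essentially the same route as the paper: both reduce the claim to Proposition~1 (weak accessibility) together with finiteness of the state/action spaces and boundedness of the cost, and then invoke the standard average-cost optimality results for finite weakly communicating MDPs (Bertsekas, \emph{Dynamic Programming and Optimal Control}, Vol.~II, Props.~4.2.1, 4.2.3, 4.2.6). Your added vanishing-discount sketch, including the uniform bound on the relative values $h_\lambda$ via communication with the reference state, is simply a correct outline of how that cited theorem is itself proved, so it does not change the substance of the argument.
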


	\begin{proof}
		As per Proposition 1, the problem defined by equation \eqref{UncMDP_eqn} is weakly accessible. Consequently, in accordance with Proposition 4.2.3 in \cite{bertsekas2011dynamic}, the optimal average cost remains consistent across all initial states. Furthermore, based on Proposition 4.2.6 in \cite{bertsekas2011dynamic}, there exists an optimal policy. According to Proposition 4.2.1 in \cite{bertsekas2011dynamic}, if we can identify $J^\ast$ and $V(s)$ satisfying \eqref{Bellman_eqn}, then the optimal policy can be determined using \eqref{OptimalAction_eqn}.
	\end{proof}
	
	The optimal policy, denoted as $\pi^\ast$, relies on $V(s)$, which typically lacks a solution in closed form. Standard methods, like the (Relative) Value Iteration and Policy Iteration algorithms, can be employed to solve this optimization problem.
	
	\begin{definition}
		Suppose that there exists a $\Delta_{T}(b) > 0$ for each $b$ such that the action $\pi(b,\Delta^{\text{AoII}})$ is $1$ for $\Delta^{\text{AoII}} \geq \Delta_{T}(b)$, and $0$ otherwise. In this case, the policy $\pi$ is a threshold policy.
	\end{definition}
	
	\begin{theorem}
		The optimal policy of the token-based MDP problem \eqref{UncMDP_eqn} is a threshold policy.
	\end{theorem}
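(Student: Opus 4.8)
The plan is to solve \eqref{UncMDP_eqn} by (relative) value iteration and to show that the fixed-point value function $V$ carries enough structure for the greedy action in \eqref{OptimalAction_eqn} to be of threshold type. Since $b\in\{0,\dots,b_{max}\}$ and $\Delta^{\text{AoII}}\in\{0,\dots,\Delta_{\text{max}}\}$ the state space is finite, so by Propositions 1 and 2 the iterates converge to a $V$ satisfying \eqref{Bellman_eqn}. For $b>0$ and $\Delta\neq 0$ both actions are admissible, and the optimal policy picks $a=1$ exactly when $g(b,\Delta):=Q(b,\Delta,1)-Q(b,\Delta,0)\le 0$, where $Q(b,\Delta,a)=C(s,a)+\sum_{s'}P(s'|s,a)V(s')$. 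Thus it suffices to prove that, for each fixed $b$, $g(b,\cdot)$ is non-increasing in $\Delta$: then $\{\Delta:\,g(b,\Delta)\le 0\}$ is an up-set of the form $\{\Delta\ge\Delta_T(b)\}$, whose left endpoint is the threshold. (At $\Delta=0$ action $a=1$ merely wastes a token, so $\Delta_T(b)\ge 1$; at $b=0$ no update is admissible, so set $\Delta_T(0)=\infty$.)

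Next I would compute $g$ from the transition probabilities in Case~2 ($a=0$) and Case~4 ($a=1$). Because the instantaneous cost $C(s,a)=\Delta^{\text{AoII}}$ does not depend on $a$ it cancels, and all terms that land in a state $(\cdot,0)$ are independent of $\Delta$. Writing $D(b,\Delta):=V(b,\Delta+1)-V(b,\Delta)$ together with
\[ \bar D_1(b,\Delta)=\alpha D(b,\Delta)+(1-\alpha)D(b-1,\Delta),\qquad \bar D_0(b,\Delta)=\alpha D(b+1,\Delta)+(1-\alpha)D(b,\Delta), \]
a direct calculation yields $g(b,\Delta+1)-g(b,\Delta)=(1-\beta)\,\bar D_1(b,\Delta+1)-(1-p_t)\,\bar D_0(b,\Delta+1)$.

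To sign this difference I would establish, by induction on the value-iteration iterates $V_n$ (from $V_0\equiv 0$), three structural properties that pass to the limit $V$: (A) $V(b,\cdot)$ is non-decreasing, so $D\ge 0$; (B) $V(b,\cdot)$ is convex in $\Delta$; and (C) $V$ has non-decreasing differences in $b$, i.e.\ $D(b,\Delta)$ is non-decreasing in $b$ (supermodularity of $V$ in $(b,\Delta)$). Granting these, (C) gives $\bar D_1(b,\Delta+1)\le \bar D_0(b,\Delta+1)$, and since $\beta>p_t$ implies $0\le 1-\beta\le 1-p_t$, while $\bar D_0\ge 0$ by (A),
\[ g(b,\Delta+1)-g(b,\Delta)\le (1-\beta)\,\bar D_0(b,\Delta+1)-(1-p_t)\,\bar D_0(b,\Delta+1)=(p_t-\beta)\,\bar D_0(b,\Delta+1)\le 0, \]
which is exactly the required monotonicity of $g(b,\cdot)$ and hence the threshold property.

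The main obstacle is the structural induction, and specifically preserving (B) and (C) through the minimization $V_{n+1}(b,\cdot)=\min_a Q_{V_n}(b,\cdot,a)$. For a fixed action the $Q$-function inherits convexity and supermodularity from $V_n$ (the token update is an $\alpha$-mixture of shifts in $b$, and the AoII update an affine mixing of $V(\cdot,0)$ and $V(\cdot,\Delta+1)$, both of which preserve these properties), but the pointwise minimum of two convex functions need not be convex in general. I would resolve this in the standard way, proving the convexity inequality $V_{n+1}(b,\Delta-1)+V_{n+1}(b,\Delta+1)\ge 2V_{n+1}(b,\Delta)$ directly by selecting at $\Delta\pm1$ the actions that are optimal at $\Delta$ and invoking the single-crossing (threshold) form of the stage-$n$ minimizer, so that (A)--(C) are maintained jointly along the iteration. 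Passing to the limit then transfers (A)--(C) to the fixed point $V$ and completes the argument.
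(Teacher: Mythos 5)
Your setup (reduction to monotonicity of $g(b,\cdot)$ along value iteration) and your difference formula $g(b,\Delta+1)-g(b,\Delta)=(1-\beta)\bar D_1(b,\Delta+1)-(1-p_t)\bar D_0(b,\Delta+1)$ agree exactly with the paper's proof (its Eq.~\eqref{eqn:DV_diff}). The gap is in how you sign this difference: your property (C) --- supermodularity, i.e.\ $D(b,\Delta)$ non-decreasing in $b$ --- is false for this MDP, and false in precisely the direction your chain of inequalities needs. Tokens are a resource that mitigates age: with a token you can reset the AoII next slot with probability $\beta$, whereas at $b=0$ you must wait for an arrival, so the marginal cost of age is \emph{larger} at smaller $b$ (the value function has \emph{decreasing} differences in $b$, i.e.\ submodularity). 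Concretely, take $p_s=1$, $p_R=\beta=0.9$, $p_t=0.01$, $\alpha=0.5$, $b_{\text{max}}=1$, $\Delta_{\text{max}}=3$ and run value iteration from $V_0\equiv 0$: one gets $V_2(0,0)=0.1$, $V_2(0,1)=2.98$, $V_2(1,0)=0.1$, $V_2(1,1)=1.2$, hence $D_2(0,0)=2.88>1.1=D_2(1,0)$, violating (C) already at the second iterate --- exactly where your induction is supposed to run. With the correct (submodular) direction one has $\bar D_1\geq\bar D_0$, and the sign of $(1-\beta)\bar D_1-(1-p_t)\bar D_0$ becomes a genuine competition between the larger differences at lower token levels and the smaller prefactor $1-\beta<1-p_t$; your termwise comparison cannot resolve it.

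What is actually preserved by the Bellman operator --- and what the paper proves by induction on the VIA iterates, handling the min via $\min\{x,y\}=x+\min\{0,y-x\}$ and a four-case analysis rather than via convexity (your property (B) is never needed) --- is the single \emph{weighted} invariant of Eq.~\eqref{eqn:DV_diff_iter_k}:
\begin{equation*}
(1-\beta)\big[V_k(b-1,\Delta^+)-V_k(b-1,\Delta^-)\big]\;\leq\;(1-p_t)\big[V_k(b,\Delta^+)-V_k(b,\Delta^-)\big],\qquad \Delta^-\leq\Delta^+,
\end{equation*}
i.e.\ value differences may grow as $b$ decreases, but at most by the factor $(1-p_t)/(1-\beta)$. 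This is exactly what your argument needs: writing it for the adjacent pairs $(b-1,b)$ and $(b,b+1)$ and taking the convex combination with weights $1-\alpha$ and $\alpha$ yields $(1-\beta)\bar D_1(b,\Delta+1)\leq(1-p_t)\bar D_0(b,\Delta+1)$, hence $g(b,\Delta+1)\leq g(b,\Delta)$. So replace your three separate properties (A)--(C) by this one invariant and run your induction on it; that repairs the proof and is, in substance, the paper's argument.
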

	
	\begin{proof}
		The proof can be found in Appendix \ref{Apen1:Theorem1}.
	\end{proof}

	\section{A system with two rate constraints}
	\label{TwoRateSection}
	Consider a scenario where the sensors of an IoT device measure physical variables and transmit updates to a user's monitoring node, such as a smartphone, it is reasonable to assume that a higher demand for more frequent updates or fresher information arises when the user actively monitors the variables on their phone. Conversely, when there is no active monitoring, updates occur at a regular (or minimum) rate. In simple terms, the system imposes two rate constraints in a pull-based scenario: one in response to user requests and another when there is no request. Our objective herein is to obtain an optimal update policy that minimizes the AoI in such a system while simultaneously satisfying the two rate constraints.
	
	\subsection{Problem Formulation}
	Considering the system model in Fig. \eqref{fig:GeneralSystemModel}, the optimization of average AoI under these two constraints can be cast into an infinite-horizon average cost CMDP problem, given by,
	
	\begin{align}
		\label{CMDP_eqn_2rate}
		\min_{\pi \in \Pi} \ &\limsup_{T\rightarrow\infty} {\frac{1}{T} E\Bigg[ \sum_{t=0}^{T-1} C\Big( s^\pi(t),a^\pi(t) \Big) \Big| s(0) \Bigg]}, \\
		\text{s.t.}\ &\limsup_{T\rightarrow\infty} {\frac{1}{T} E\Bigg[ \sum_{t=0}^{T-1} \Big(1-r(t)\Big)a^\pi(t) \Big| s(0) \Bigg]} \leq \alpha_{0}, \notag \\
		\text{s.t.}\ &\limsup_{T\rightarrow\infty} {\frac{1}{T} E\Bigg[ \sum_{t=0}^{T-1} r(t)a^\pi(t) \Big| s(0) \Bigg]} \leq \alpha_{1}, \notag
	\end{align}
	where the \emph{state} is defined as $s(t)\overset{\text{def}}{=}\left[\Delta(t),r(t)\right]^T$, $\Delta(t)$ is the AoI, and $r(t)$ is the random process related to the user's requests; $r(t) = 1$ when there is a user request, and $r(t) = 0$ when there is no request. We assume that the request arrival process follows an i.i.d. Bernoulli distribution with parameter $q$ over time slots.
	Same as the problem \eqref{CMDP_eqn}, $\pi$ is a sequence of \emph{actions} $\pi\overset{\text{def}}{=}\left(a^\pi(0),a^\pi(1),a^\pi(2),\ldots\right)$, $a^\pi(t) \in \{0,1\}$, where $a^\pi(t)=0$ denotes the \emph{remain idle} action, and $a^\pi(t)=1$ denotes the \emph{update} action at time $t$. In this system model, the action is adopted based on the state vector known at the beginning of each time slot, and the receiver is updated through an error-free channel with a lag normalized to one time slot. Finally, the \emph{cost function} is defined to be equal to the AoI, i.e., $C\big( s(t),a(t) \big)\overset{\text{def}}{=}\Delta(t)$.
	In this formulation, $\alpha_{\text{1}}$ and $\alpha_{\text{0}}$ impose limits on the periods of time during which the system is transmitting updates, while simultaneously either receiving or not receiving a request from the receiver, respectively. These two parameters can also be represented as follows:
	\begin{gather}
		\alpha_{0} \!=\! \limsup_{T\rightarrow\infty} {\frac{1}{T} E\left[ \sum_{t=0}^{T-1} \Big(1\!-\!r(t)\Big) \Big| s(0) \right]} \! \times\! \alpha_{\text{min}} \!=\! (1\!-\!q) \alpha_{\text{min}},\notag \\
		\alpha_{1} = \limsup_{T\rightarrow\infty} {\frac{1}{T} E\left[ \sum_{t=0}^{T-1} r(t) \Big| s(0) \right]} \times \alpha_{\text{max}} = q \alpha_{\text{max}}.
	\end{gather}
	
	Here, $\alpha_{\text{max}}$ and $\alpha_{\text{min}}$ denote the maximum desired update rates given that there is or is not a request, respectively.
	
	\subsubsection{Dual Lagrangian Problem}
	The \emph{primary} CMDP problem \eqref{CMDP_eqn_2rate} can be described by introducing two Lagrange multipliers $\lambda_0 \geq 0$ and $\lambda_1 \geq 0$ and defining Lagrangian function as:
	
	{\small
		\begin{align}
			\label{eqn:LagrangeFunc}
			\mathcal{L}&(\bm{\lambda},\pi) \\
			&\overset{\text{def}}{=} \limsup_{T\rightarrow\infty} {\frac{1}{T} E\left[ \sum_{t=0}^{T-1} C\big( s^{\pi}(t),a^{\pi}(t) \big) \Big| s(0) \right]} \!+\! \lambda_0 c_0(\pi) \!+\! \lambda_1 c_1(\pi) \notag \\ 
			&= \limsup_{T\rightarrow\infty} \frac{1}{T} E\Bigg[ \sum_{t=0}^{T-1} C\big( s^{\pi}(t),a^{\pi}(t) \big) + \lambda_0 \big(1-r(t)\big)a^{\pi}(t) \notag \\
			& \qquad \qquad \qquad \qquad + \lambda_1 r(t)a^{\pi}(t) \Big| s(0) \Bigg] - \lambda_0 \alpha_0 - \lambda_1 \alpha_1, \notag
		\end{align}
	}
	
	\noindent where $\bm{\lambda}\overset{\text{def}}{=}[\lambda_0\ \lambda_1]^T$ is the vector of Lagrangian multipliers. Here, we have defined, 
	
	\begin{align}
		c_0(\pi)&\overset{\text{def}}{=}\limsup_{T\rightarrow\infty} {\frac{1}{T} E\left[ \sum_{t=0}^{T-1} \Big(1-r(t)\Big)a^\pi(t) \Big| s(0) \right]} - \alpha_{0} \notag \\ c_1(\pi)&\overset{\text{def}}{=}\limsup_{T\rightarrow\infty} {\frac{1}{T} E\left[ \sum_{t=0}^{T-1} r(t)a^\pi(t) \Big| s(0) \right]} - \alpha_{1},
	\end{align}
	
	\noindent where $c_0(\pi) \leq 0$ and $c_1(\pi) \leq 0$ represent the constraints of the CMDP problem \eqref{CMDP_eqn_2rate}. The Lagrangian \emph{dual} problem then is given by:
	
	\begin{align}
		\label{eqn:DualMDP}
		\sup_{\bm{\lambda} \geq 0} \underbrace{\min_{\pi \in \Pi} \mathcal{L}(\bm{\lambda},\pi)}_{\overset{\text{def}}{=}g(\bm{\lambda})},
	\end{align}
	where $g(\bm{\lambda}) = g(\lambda_0, \lambda_1) = \mathcal{L}(\bm{\lambda},\pi_{\bm{\lambda}}^\ast)$ represents the Lagrange dual function, where $\pi_{\bm{\lambda}}^\ast$ is a $\bm{\lambda}$-optimal policy derived from an unconstrained MDP problem for a given $\bm{\lambda}$, as follows:
	\begin{align}
		\label{eqn:DualMDPpolicy}
		\pi_{\bm{\lambda}}^\ast \in \argmin_{\pi \in \Pi} \mathcal{L}(\bm{\lambda},\pi).
	\end{align}
	
	The new cost function for the dual problem \eqref{eqn:DualMDP}, denoted by $C_{\bm{\lambda}}\big( s(t),a(t) \big)$, can be redefined according to \eqref{eqn:LagrangeFunc}:
	\begin{align}
		\label{LagrangeCostFunc}
		C_{\bm{\lambda}}\big( s(t),a(t) \big) \!\overset{\text{def}}{=}\! C\big( s(t),a(t) \big) \!+\! \lambda_0 \big(1\!-\!r(t)\big)a(t) \!+\! \lambda_1 r(t)a(t).
	\end{align}
	
	The solution to the dual Lagrangian problem \eqref{eqn:DualMDP} generally provides a lower bound for the primary CMDP problem \eqref{CMDP_eqn_2rate}. However, given that the state space of the problem $\mathcal{S}$ is a finite set, the growth condition \cite[Eq. 11.21]{altman1999constrained} is satisfied. Additionally, as the cost function $C\big( s(t),a(t) \big) \geq 0$ is bounded from below, the conditions outlined in \cite[Corollary 12.2]{altman1999constrained} are met, ensuring the equivalence of the optimal solutions for both the dual and primary problems. Consequently, the optimal solution for the primary CMDP \eqref{eqn:DualMDP} can be obtained by solving:
	\begin{align}
		\label{eqn:OptimalDual}
		\sup_{\bm{\lambda} \geq 0} \mathcal{L}(\bm{\lambda},\pi_{\bm{\lambda}}^\ast),
	\end{align}
	where $\pi_{\bm{\lambda}}^\ast$ is derived from \eqref{eqn:DualMDPpolicy}. More specifically, the optimal policy can be determined in two steps: first, by solving the unconstrained MDP \eqref{eqn:DualMDPpolicy} and identifying the $\bm{\lambda}$-optimal policy $\pi_{\bm{\lambda}}^\ast$; second, by obtaining the optimal value of the Lagrangian vector $\bm{\lambda}$ as per \eqref{eqn:OptimalDual}. It is important to note that according to \eqref{LagrangeCostFunc}, the cost of action $a(t)=1$ increases with the increase of $\bm{\lambda}$ multipliers, while it remains constant for action $a(t)=0$. Thus, the cost function $C_{\bm{\lambda}}\big( s(t),a(t) \big)$ will increase while the rates will decrease with the increase of $\bm{\lambda}$. On the other hand, the dual function $g(\bm{\lambda})$ is decreasing in $\bm{\lambda}$ \cite[Lemma 3.1]{beutler1985optimal}. Therefore, the search is for the lowest value of $\bm{\lambda}$ that results in the maximum allowable rates. In Section \ref{IteratibeTriangleSection}, we present an iterative algorithm to obtain the optimal policy and $\bm{\lambda}$. 
	
	\subsubsection{Formulation of the Token-based MDP problem}
	We convert the primary CMDP problem \eqref{CMDP_eqn_2rate} into a token-based MDP problem by including token variables within the state vector, as mentioned earlier. This token-based MDP is characterized by a tuple $<\hat{\mathcal{S}},\mathcal{A},P,C>$, where $\hat{\mathcal{S}}$ is the state space, $\mathcal{A}$ is the set of actions, $P$ is the state transition probability function, and $C$ is the cost of MDP.
	
	\begin{itemize}[leftmargin=0.2in]
		\item States: the new state vector $\hat{s}(t)$ is defined as $\hat{s}(t)\overset{\text{def}}{=}\left[b_0(t),b_1(t),\Delta(t),r(t)\right]^T \in \hat{\mathcal{S}}$, where $b_0(t)$ and $b_1(t)$ are token variables for two constraints, respectively, taking value in the set $B=\{0,1,2,\ldots,b_{max}\}$. $\Delta(t) \in \{1,2,3,\cdots,\Delta_{max}\}$ is the AoI at the receiver, and $r(t) \in \{0,1\}$ is the request process at time slot $t$; $r(t)$ is $1$ when there is a request from the destination node and $0$ otherwise. The state space, $\hat{\mathcal{S}}\!=\!\big\{(b_0,b_1,\Delta,r)\!:\! \ b_0 \in B, b_1 \in B,\Delta \in \{1,2,\cdots\!,\Delta_{max}\} \text{, and } r \in \{0,1\} \big\}$ is a finite set. The time evolution of these state variables is given by the following equations:
		
		{\footnotesize
			\begin{align}
				\label{b0Evol_eqn}
				b_0(t\!+\!1)\!=\!
				\begin{cases}
					\min\left\{b_0(t) \!-\! a(t) \!+\! 1,b_{\text{max}}\right\} & \parbox[t][][t]{2.5 cm}{\footnotesize {with probability (w.p.) $\alpha_{\text{min}}$ when $r(t)=0$, }}   \\
					b_0(t) - a(t) & \parbox[t][][t]{2.5 cm}{\footnotesize {w.p. $1-\alpha_{\text{min}}$ when $r(t)=0$, }} \\
					b_0(t) & \parbox[t][][t]{2.5 cm}{\footnotesize {w.p. $1$ when $r(t)=1$.}}
				\end{cases}
			\end{align}

			\begin{align}
				\label{b1Evol_eqn}
				b_1(t\!+\!1)\!=\!
				\begin{cases}
					\min\left\{b_1(t) \!-\! a(t) \!+\! 1,b_{\text{max}}\right\} & \parbox[t][][t]{2.5 cm}{\footnotesize {w.p. $\alpha_{\text{max}}$ when $r(t)=1$, }}   \\
					b_1(t) - a(t) & \parbox[t][][t]{2.5 cm}{\footnotesize {w.p. $1-\alpha_{\text{max}}$ when $r(t)=1$, }} \\
					b_1(t) & \parbox[t][][t]{2.5 cm}{\footnotesize {w.p. $1$ when $r(t)=0$.}}
				\end{cases}
			\end{align}

			\begin{align}
				\label{AoI_Evol_eqn}
				\Delta(t\!+\!1)\!=\!
				\begin{cases}
					\min\left\{\Delta(t)\!+\!1,\Delta_{max}\right\} & \text{when } a(t)\!=\!0, \\
					1 & \text{when } a(t)\!=\!1.
				\end{cases}
			\end{align}

			\begin{align}
				\label{r_Evol_eqn}
				r(t+1)\!=\!
				\begin{cases}
					1 & \text{w.p. } q, \\
					0 & \text{w.p. } 1-q.
				\end{cases}
			\end{align}
		}
		
		\item Actions: at time $t$, $a^\pi(t)=0$ represents the action of staying idle, while $a^\pi(t)=1$ represents the action of transmitting an update. The action $a(t)$ is forced to be $0$ when there is no token, i.e., when $b_0(t)=0$ and $r(t)=0$, or $b_1(t)=0$ and $r(t)=1$.
		\item Transition probabilities: given the following equation, 
		\begin{align}
			P&\left[\hat{s}(t+1)| \hat{s}(t),a(t)\right]\!=\!P\left[b_0(t+1)|b_0(t),r(t),a(t)\right] \notag \\
			&\times P\left[b_1(t+1)|b_1(t),r(t),a(t)\right] \notag \\ 
			&\times  P\left[\Delta(t+1)|\Delta(t),a(t)\right]\times P\left[r(t+1)\right],
		\end{align}
		the transition probabilities can simply be written using the equations \eqref{b0Evol_eqn} to \eqref{r_Evol_eqn}.
		\item Cost function: the cost function is identical to the primary CMDP problem: $C\big( \hat{s}(t),a(t)\big)\overset{\text{def}}{=}\Delta(t)$.
	\end{itemize}
	
	The resulting token-based unconstrained MDP optimization problem is given by,
	\begin{align}
		\label{UncMDP_eqn_2rate}
		\min_{\pi \in \Pi}\ \limsup_{T\rightarrow\infty} {\frac{1}{T} E\Bigg[ \sum_{t=0}^{T-1} C\Big( \hat{s}^\pi(t),a^\pi(t) \Big) \Big| \hat{s}(0) \Bigg]},
	\end{align}
	where $\Pi$ is the set of all feasible policies, and $\hat{s}(0)$ is the initial state of the system. In what follows, we first present analytical results for this problem. In Section \ref{TwoRateNumericalSection}, we will compare the optimal solution with the optimal CMDP solution obtained through a proposed iterative algorithm.
	
	\subsection{Analytical Results} 
	
	In this section, we provide analytical results concerning the existence and structure of the optimal policy for the token-based MDP problem \eqref{UncMDP_eqn_2rate}. Here, we omit the \emph{hat} on $\hat{s}$ for the sake of simplicity.
	
	\begin{proposition}
		The token-based MDP problem \eqref{UncMDP_eqn_2rate} is weakly accessible.
	\end{proposition}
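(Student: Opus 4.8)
The plan is to mirror the reachability argument used for Proposition 1: I would exhibit a single stationary randomized policy $\pi$ --- one that, at every state where a token permits it, assigns positive probability to both $a=0$ and $a=1$ --- under which an arbitrary target state $\hat{s}^\prime = (b_0^\prime, b_1^\prime, \Delta^\prime, r^\prime) \in \hat{\mathcal{S}}$ is reachable from an arbitrary source state $\hat{s} = (b_0, b_1, \Delta, r)$ with positive probability. Establishing this for every ordered pair makes the whole state space a single communicating class ($S_c = \hat{\mathcal{S}}$, $S_t = \emptyset$), which is stronger than, and hence implies, weak accessibility.

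The structural fact that makes the two-token case tractable is the \emph{decoupling} visible in \eqref{b0Evol_eqn} and \eqref{b1Evol_eqn}: $b_0$ changes only in slots with $r=0$ and $b_1$ changes only in slots with $r=1$, while the request process \eqref{r_Evol_eqn} is i.i.d. Bernoulli$(q)$ and independent of the state and action. Assuming $0<q<1$ and $0<\alpha_{\text{min}},\alpha_{\text{max}}<1$, every finite request pattern occurs with positive probability, and within it each token can be steered in isolation while the other is frozen. Concretely, during an $r=0$ slot, choosing $a=0$ raises $b_0$ by one with probability $\alpha_{\text{min}}$, and choosing $a=1$ (legal when $b_0>0$) lowers it by one with probability $1-\alpha_{\text{min}}$; the symmetric statement holds for $b_1$ during $r=1$ slots with $\alpha_{\text{max}}$. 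Hence I would first drive $b_0$ to its target over a positive-probability run of $r=0$ slots (with $b_1$ untouched), then drive $b_1$ to its target over a run of $r=1$ slots (with $b_0$ untouched), exactly as in the single-constraint proof but applied coordinate-by-coordinate.

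I would handle the age coordinate last, since by \eqref{AoI_Evol_eqn} it is reset to $1$ by any update and incremented by any idle slot: after the tokens are positioned, one update resets $\Delta$ to $1$, then $\Delta^\prime-1$ consecutive idle slots raise it to $\Delta^\prime$, and the terminal request value $r^\prime$ is realized with probability $q$ or $1-q$ in the last slot. The main obstacle --- the genuinely new point relative to Proposition 1 --- is the \emph{bookkeeping coupling} between this age phase and the token counts: the update that resets $\Delta$ consumes one token, and the idle slots that raise $\Delta$ may inadvertently replenish one. I would resolve this by ordering the phases and selecting specific positive-probability token realizations: the age-reset update is performed in a slot whose request type matches whichever token has slack, starting that token one unit above its target so the decrement lands it on target, and during the subsequent idle run I invoke the no-arrival transition (probability $1-\alpha_{\text{min}}$ or $1-\alpha_{\text{max}}$, and automatic for the non-matching token) so both tokens stay frozen. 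The boundary values need a brief separate check: when a target token equals $b_{max}$, the reset update is instead taken in a matching slot with the arrival outcome, which by the capping in \eqref{b0Evol_eqn} and \eqref{b1Evol_eqn} leaves that token unchanged at $b_{max}$; when it equals $0$, no update is charged against it. Each listed event has positive probability, so their finite concatenation does too, completing the argument.
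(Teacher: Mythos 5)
Your proposal follows essentially the same route as the paper's proof: coordinate-by-coordinate reachability with positive probability under a stationary randomized policy (steer $b_0$ in $r=0$ slots, $b_1$ in $r=1$ slots, then handle $\Delta$, with $r^\prime$ realized in the final slot), which makes the whole state space one communicating class and hence weakly accessible. It is correct, and in fact more careful than the paper's own argument, which places both tokens at their targets \emph{before} executing the age-resetting update and thus glosses over the token-consumption coupling you address --- including the corner case $b_0^\prime=b_1^\prime=0$ with $\Delta^\prime<\Delta$, where the paper's literal phase ordering would leave the update action forbidden, and which your one-unit-above-target and arrival-capping devices resolve cleanly.
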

	
	\begin{proof}
		We establish that any state $s^\prime=\left(b_0^\prime,b_1^\prime,r^\prime,\Delta^\prime\right) \in \hat{\mathcal{S}}$ is reachable from any other state $s=\left(b_0,b_1,r,\Delta\right) \in \hat{\mathcal{S}}$ under a stationary stochastic policy $\pi$, where the action $a\in\{0,1\}$ at each state is randomly selected.
		The token state $b_0^\prime<b_0$ ($b_1^\prime<b_1$) can be accessed from $b_0$ ($b_1$) with a positive probability by executing action $a=1$ for $b_0-b_0^\prime$ ($b_1-b_1^\prime$) time slots. Conversely, $b_0^\prime \geq b_0$ ($b_1^\prime \geq b_1$) is reachable from $b_0$ ($b_1$) with a positive probability (w.p.p.) by implementing action $a=0$ for $b_0^\prime-b_0$ ($b_1^\prime-b_1$) slots. Upon reaching the state $b_0^\prime$ ($b_1^\prime$), the subsequent actions have no impact, and the token variables' state can remain unchanged with positive probability. Consequently, for the remainder of the proof, we consider the tokens' state as $(b_0^\prime,b_1^\prime)$. Additionally, the state $r^\prime \in \{0,1\}$ is reached w.p.p. irrespective of the previous or current system state.
		Finally, the state $\Delta^\prime<\Delta$ is attainable from $\Delta$ w.p.p. by executing action $a=1$ for one time slot, followed by executing action $a=0$ for $\Delta^\prime$ slots. Meanwhile, the state $\Delta^\prime \geq \Delta$ is accessible from $\Delta$ by executing action $a=0$ for $\Delta^\prime-\Delta$ slots. Consequently, $s^\prime$ is accessible from $s$, thereby concluding the proof.
	\end{proof}
	
	\begin{proposition}
		In the token-based MDP problem \eqref{UncMDP_eqn_2rate}, the optimal average cost $J^\ast$ achieved by an optimal policy $\pi^\ast$ is the same for all initial states, satisfying the Bellman’s equations given by \eqref{Bellman_eqn} and \eqref{OptimalAction_eqn}.
	\end{proposition}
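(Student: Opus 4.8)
The plan is to mirror the argument used for the single-rate problem (the earlier proposition on problem \eqref{UncMDP_eqn}) and reduce everything to the standard average-cost dynamic-programming machinery, since the substantive work — weak accessibility — has already been carried out in the preceding proposition. First I would invoke the weak accessibility of problem \eqref{UncMDP_eqn_2rate}, just established, which is exactly the hypothesis required by Proposition 4.2.3 in \cite{bertsekas2011dynamic}: for a weakly accessible (weakly communicating) finite-state MDP the optimal average cost $J^\ast$ is independent of the initial state. Applying it directly yields the first claim of the statement.

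Next I would establish existence of an optimal stationary policy together with the Bellman characterization. Because the state space $\hat{\mathcal{S}}$ is finite and the per-stage cost $C\big(\hat{s}(t),a(t)\big)=\Delta(t)$ is bounded (it takes values in $\{1,\dots,\Delta_{max}\}$), the standing assumptions of the average-cost theory in \cite{bertsekas2011dynamic} hold. I would then cite Proposition 4.2.6 of \cite{bertsekas2011dynamic} for the existence of an optimal policy $\pi^\ast$, and Proposition 4.2.1 to assert that whenever a scalar $J^\ast$ and a value function $V(\cdot)$ solve the Bellman equation \eqref{Bellman_eqn}, any action attaining the minimum on its right-hand side — i.e. the selection rule \eqref{OptimalAction_eqn} — defines an optimal policy. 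Chaining these three results reproduces the statement verbatim.

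The one point that needs care — and the only real obstacle I anticipate — is verifying that the extra structure of the two-token model does not invalidate the hypotheses of the cited propositions. In particular, the dynamics force $a=0$ whenever the relevant token variable is empty (when $b_0=0,r=0$ or $b_1=0,r=1$), so the admissible action set is state-dependent. I would note that this merely restricts $\mathcal{A}$ to a nonempty admissible subset at each state, which is permitted within the finite-state framework and affects neither the finiteness of $\hat{\mathcal{S}}$, nor the boundedness of the cost, nor the weak-accessibility conclusion already proved. Once this admissibility remark is in place, the reduction to \cite{bertsekas2011dynamic} is routine and the proof closes exactly as in the single-rate case.
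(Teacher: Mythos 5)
Your proposal is correct and follows essentially the same route as the paper: the paper's proof simply notes weak accessibility (its Proposition 3) and repeats the single-rate argument, invoking Propositions 4.2.3, 4.2.6, and 4.2.1 of \cite{bertsekas2011dynamic} exactly as you do. Your additional remark on the state-dependent admissible action sets (forced idling when a token bucket is empty) is a careful touch the paper leaves implicit, but it does not change the substance of the argument.
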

	
	\begin{proof}
		Given that the MDP problem \eqref{UncMDP_eqn_2rate} is weakly accessible, the proof is the same as the proof of Proposition 2, which has been omitted for the sake of space. 
	\end{proof}
	
	Standard methods, like the (Relative) Value Iteration and Policy Iteration algorithms, can be employed to solve this optimization problem.
	
	\begin{theorem}
		The optimal policy $\pi^\ast$ of the token-based MDP problem \eqref{UncMDP_eqn_2rate} is a threshold policy. This means that for each combination of $(b_0,b_1,r)$ there exists an age threshold $\Delta_{T}$ such that $\pi^\ast(b_0,b_1,\Delta,r)$ is $1$ for $\Delta \geq \Delta_{T}$, and $0$ otherwise.
	\end{theorem}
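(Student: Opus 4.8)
The plan is to mirror the argument behind Theorem 1, relying on the Relative Value Iteration (RVI) characterization justified by Proposition 4 together with the Bellman equations \eqref{Bellman_eqn}--\eqref{OptimalAction_eqn}. I would introduce the state--action value function $Q(s,a)\overset{\text{def}}{=}C(s,a)+\sum_{s^\prime\in\hat{\mathcal{S}}}P(s^\prime|s,a)V(s^\prime)$, so that by \eqref{OptimalAction_eqn} updating is optimal exactly when $Q(s,1)\le Q(s,0)$. Because the per-slot cost $C(s,a)=\Delta$ is independent of the action, the age term cancels in the advantage $Q(s,1)-Q(s,0)$, leaving a pure difference of expected future values. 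The objective is then to show that this advantage is non-increasing in $\Delta$ for each fixed $(b_0,b_1,r)$; this monotonicity makes the set of ages at which $a=1$ is optimal upward-closed, which is precisely the claimed threshold structure with threshold $\Delta_T(b_0,b_1,r)$.

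The key lemma to establish first is that $V(b_0,b_1,\Delta,r)$ is non-decreasing in $\Delta$ for each fixed $(b_0,b_1,r)$. I would prove this by induction on the RVI iterates $V_{n+1}(s)=\min_{a}\{C(s,a)+\sum_{s^\prime}P(s^\prime|s,a)V_n(s^\prime)\}$ (the usual reference offset is constant in $\Delta$ and hence irrelevant), starting from $V_0\equiv 0$. From \eqref{b0Evol_eqn}--\eqref{r_Evol_eqn} the transitions of $b_0$, $b_1$ and $r$ do not depend on $\Delta$, while by \eqref{AoI_Evol_eqn} the next age under $a=0$ is $\min\{\Delta+1,\Delta_{max}\}$, which is non-decreasing in $\Delta$. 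Hence $Q_n(s,0)$ is non-decreasing in $\Delta$ (the increasing cost $\Delta$ plus a composition of the non-decreasing $V_n$ with a non-decreasing age map), and $Q_n(s,1)$ is likewise non-decreasing, since its future term is independent of $\Delta$ (updating resets the age to $1$) and only its cost term $\Delta$ grows. As the pointwise minimum of non-decreasing functions is non-decreasing, $V_{n+1}$ inherits the property, and it is preserved in the RVI limit.

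Granting the lemma, the threshold structure follows cleanly. Since $a=1$ forces $\Delta^\prime=1$ regardless of the current age and leaves the token and request dynamics age-independent, the future term of $Q(s,1)$ is a constant $h_1(b_0,b_1,r)$ in $\Delta$, whereas the future term of $Q(s,0)$, denoted $h_0(b_0,b_1,r,\Delta)$, is non-decreasing in $\Delta$ by the lemma. Therefore
\begin{equation}
	Q(s,1)-Q(s,0)=h_1(b_0,b_1,r)-h_0(b_0,b_1,r,\Delta),
\end{equation}
which is non-increasing in $\Delta$. Hence there exists a threshold $\Delta_T(b_0,b_1,r)$ (possibly exceeding $\Delta_{max}$, corresponding to never updating, e.g.\ when the relevant token is depleted) above which the optimal action is to update, proving the claim.

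I expect the main obstacle to be bookkeeping rather than conceptual: one must verify that the monotonicity of $V_n$ in $\Delta$ survives (i) the state-dependent restriction that forces $a=0$ whenever the relevant token is zero, and (ii) the saturation $\min\{\Delta+1,\Delta_{max}\}$ at the age boundary. Both are handled by observing that restricting the feasible action set, or saturating the age map, preserves the non-decreasing-in-$\Delta$ property of the minimized $Q$-values, but each case should be checked explicitly. Passing the structural property through the RVI limit is then standard given the weak accessibility established in Proposition 3.
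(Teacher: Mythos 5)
Your proof is correct, and it takes a genuinely cleaner route than the one the paper nominally follows. The paper proves this theorem by deferring verbatim to the proof of Theorem 1 (Appendix A), whose key inductive lemma is a \emph{coupled} inequality across adjacent token levels,
$(1-\beta)\big[V_k(b-1,\Delta^+)-V_k(b-1,\Delta^-)\big]-(1-p_t)\big[V_k(b,\Delta^+)-V_k(b,\Delta^-)\big]\le 0$,
established by induction on the VIA iterates together with a four-case analysis on which action attains the minimum. That coupling is forced by the AoII dynamics of Section III: there an update succeeds only with probability $\beta$, so the future term of the update action still depends on $\Delta$, and the advantage mixes value increments at token levels $b-1$, $b$, $b+1$ with unequal weights $(1-\beta)$ and $(1-p_t)$. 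In the two-rate AoI model the forward channel is error-free and \eqref{AoI_Evol_eqn} resets the age to $1$ deterministically, so---exactly as you observe---the future term of $Q(s,1)$ is a constant $h_1(b_0,b_1,r)$ in $\Delta$, and the whole argument collapses to your single key lemma: monotonicity of $V$ in $\Delta$ at fixed $(b_0,b_1,r)$, proved by a routine induction on the iterates with no case analysis and no cross-token-level comparison. What your route buys is a self-contained and simpler proof for this particular theorem; what the paper's technique buys is generality---if the two-rate model had an unreliable forward channel, $h_1$ would again depend on $\Delta$ and plain monotonicity of $V$ would no longer suffice, whereas the coupled-inequality argument of Theorem 1 would adapt. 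Your treatment of the two boundary issues is also sound: the forced action $a=0$ at zero tokens restricts a feasible action set that does not depend on $\Delta$, and the saturation $\min\{\Delta+1,\Delta_{max}\}$ keeps the age map non-decreasing, so both preserve the monotonicity through the minimization; passing the property to the limit rests on the same standard VIA/RVI convergence fact the paper itself invokes.
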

	
	\begin{proof}
		The proof is the same as the proof of Theorem 1, which has been omitted for the sake of space. 
	\end{proof}
	
	\subsection{Iterative triangle bisection algorithm for solving the CMDP problem}
	\label{IteratibeTriangleSection}
	We present an algorithm to obtain the optimal solution for the problem \eqref{CMDP_eqn_2rate}. 
	For a CMDP with a single constraint, the optimal policy can be found by applying an iterative algorithm approach as presented in \cite{hatami2022demand}. 
	The algorithm optimizes the Lagrange dual problem \eqref{eqn:DualMDP} resulting from the CMDP by iterating through two loops: the inner loop computes the $\bm{\lambda}$-optimal policy for a given Lagrange multiplier $\bm{\lambda}$ using the Relative Value Iteration Algorithm (RVIA), and the external loop finds the optimal Lagrange multiplier $\bm{\lambda}^\ast$ for a given update policy through a bisection search. However, the aforementioned bisection search is not applicable to cases with two Lagrangian multipliers. 
	To address this issue in the external loop, we provide a generalized version of the bisection method referred to as \emph{iterative triangle bisection}, which can be employed for solving two-dimensional nonlinear equations. This method is a concise version of the algorithm introduced in \cite{harvey1976two} and outlines an algorithm for solving a system $\bm{F}_{\bm{\lambda}}\overset{\text{def}}{=}$$\begin{bsmallmatrix} c_0(\pi_{\bm{\lambda}}^\ast) \\ c_1(\pi_{\bm{\lambda}}^\ast) \end{bsmallmatrix}$$=$$\begin{bsmallmatrix} 0 \\ 0 \end{bsmallmatrix}$$ $. 
	In this method, a triangle ($\mathit{\Delta}\bm{\lambda}_A\bm{\lambda}_B\bm{\lambda}_C$) in a two-dimensional plane is iteratively bisected into two triangles ($\mathit{\Delta}\bm{\lambda}_A\bm{\lambda}_D\bm{\lambda}_C$ and $\mathit{\Delta}\bm{\lambda}_D\bm{\lambda}_B\bm{\lambda}_C$), and the search involves identifying the triangle containing a point $\bm{\lambda}_E \overset{\text{def}}{=} (\lambda_{0E},\lambda_{1E})$ that satisfies $\bm{F}_{\bm{\lambda}_E} = \begin{bsmallmatrix} c_0(\pi_{\bm{\lambda}_E}^\ast) \\ c_1(\pi_{\bm{\lambda}_E}^\ast) \end{bsmallmatrix}$$=$$\begin{bsmallmatrix} 0 \\ 0 \end{bsmallmatrix}$$ $. 
	The verification that the point $\begin{bsmallmatrix} 0 \\ 0 \end{bsmallmatrix}$ lies within a triangle $\mathit{\Delta}\bm{F}_{\bm{\lambda}_{A}}\bm{F}_{\bm{\lambda}_{D}}\bm{F}_{\bm{\lambda}_{C}}$ can be accomplished using various methods, such as the \emph{L-test} introduced in \cite{harvey1976two}. 
	The proposed method is presented in Algorithm \ref{alg_CMDP}. We have also provided the inner loop of RVIA in Algorithm \ref{alg_RVIA}. 
	Since the action space is discrete, the optimal policy $\pi_{\bm{\lambda}}^\ast$ may not yield the maximum allowable rates but rather achieves approximately equal rates. For justification, and in the case of a CMDP with a single constraint, it is well established that the ultimate optimal policy is a mixture of two non-randomized stationary policies \cite{beutler1985optimal}: one policy is tailored to values exceeding (albeit nearly equal to) the constraint, while the other is designed for values below (yet again, nearly equal to) the maximum allowable rate. Through this approach, upon mixing, the equality constraint can be met. 
	Here, we adopt a generalized approach where a set of four \emph{mixed policies} \cite[Section 6.3]{altman1999constrained}, denoted by $\pi_{\bm{\lambda}^{\!+\!+}}^\ast$, $\pi_{\bm{\lambda}^{\!+\!-}}^\ast$, $\pi_{\bm{\lambda}^{\!-\!+}}^\ast$ and $\pi_{\bm{\lambda}^{\!-\!-}}^\ast$, is employed. These four policies represent the four nearest $\bm{\lambda}$-optimal policies, achieved through a gradual decrease (increase) of $\lambda_{0}^\ast$ and/or $\lambda_{1}^\ast$ until the conditions of the greater than (smaller than) inequalities are met, such that $\begin{bsmallmatrix} c_0(\pi_{\bm{\lambda}^{\!+\!+}}^\ast) \geq 0 \\ c_1(\pi_{\bm{\lambda}^{\!+\!+}}^\ast) \geq 0 \end{bsmallmatrix}$,  $\begin{bsmallmatrix} c_0(\pi_{\bm{\lambda}^{\!-\!-}}^\ast) \leq 0 \\ c_1(\pi_{\bm{\lambda}^{\!-\!-}}^\ast) \leq 0 \end{bsmallmatrix}$, $\begin{bsmallmatrix} c_0(\pi_{\bm{\lambda}^{\!+\!-}}^\ast) \geq 0 \\ c_1(\pi_{\bm{\lambda}^{\!+\!-}}^\ast) \leq 0 \end{bsmallmatrix}$, and $\begin{bsmallmatrix} c_0(\pi_{\bm{\lambda}^{\!-\!+}}^\ast) \leq 0 \\ c_1(\pi_{\bm{\lambda}^{\!-\!+}}^\ast) \geq 0 \end{bsmallmatrix}$$ $. 
	The gradual decrease (increase) of the $\lambda$ multipliers can simply be achieved by iteratively dividing (multiplying) them by $1 + \gamma$, where $\gamma$ is chosen to be a small positive real number close to zero, e.g., $0.1$. The mixing probability of these policies can be expressed by introducing probabilities $\rho_0$ and $\rho_1$ where $\pi_{\bm{\lambda}^{\!+\!+}}^\ast$, $\pi_{\bm{\lambda}^{\!+\!-}}^\ast$, $\pi_{\bm{\lambda}^{\!-\!+}}^\ast$ and $\pi_{\bm{\lambda}^{\!-\!-}}^\ast$ are assigned probabilities  $\rho_0\rho_1$, $\rho_0(1-\rho_1)$, $(1-\rho_0)\rho_1$, and $(1-\rho_0)(1-\rho_1)$, respectively, satisfying the following system of equations:
	\begin{align}
		\rho_0\rho_1c_0(\pi_{\bm{\lambda}^{\!+\!+}}^\ast) &\!+\! \rho_0(1\!-\!\rho_1)c_0(\pi_{\bm{\lambda}^{\!+\!-}}^\ast) \!+\! (1\!-\!\rho_0)\rho_1c_0(\pi_{\bm{\lambda}^{\!-\!+}}^\ast) \notag \\
		& \!+\! (1\!-\!\rho_0)(1\!-\!\rho_1)c_0(\pi_{\bm{\lambda}^{\!-\!-}}^\ast) \!=\! 0 \\
		\rho_0\rho_1c_1(\pi_{\bm{\lambda}^{\!+\!+}}^\ast) &\!+\! \rho_0(1\!-\!\rho_1)c_1(\pi_{\bm{\lambda}^{\!+\!-}}^\ast) \!+\! (1\!-\!\rho_0)\rho_1c_1(\pi_{\bm{\lambda}^{\!-\!+}}^\ast) \notag \\
		& \!+\! (1\!-\!\rho_0)(1\!-\!\rho_1)c_1(\pi_{\bm{\lambda}^{\!-\!-}}^\ast) \!=\! 0
	\end{align}
	
	\subsection{Complexity of RVIA and iterative triangle bisection}
	\label{ComplexitySection}
	The RVIA has a computational complexity of $\mathcal{O}\left(N_{\text{RVIA}} |\mathcal{S}|^2 |\mathcal{A}| \right)$, where $|\mathcal{S}|$ is the size of the state space, $|\mathcal{A}|$ is the size of the action space, and $N_{\text{RVIA}}$ is the number of iterations for termination.
	The computational complexity of the iterative triangle bisection algorithm is $\mathcal{O}\left(N_{\text{E}} N_{\text{RVIA}} |\mathcal{S}|^2 |\mathcal{A}| \right)$, where $N_{\text{E}}$ is the iteration number of the external loop. Both $N_{\text{E}}$ and $N_{\text{RVIA}}$ depend on initial points and termination parameters $\varepsilon_{\lambda}$ and $\varepsilon_V$, respectively.
	For the CMDP problem with two rate constraints \eqref{CMDP_eqn_2rate}, the iterative triangle bisection algorithm's complexity is $\mathcal{O}\left(N_{\text{E}} N_{\text{RVIA}} |\Delta_{\text{max}}|^2 \right)$. For the corresponding token-based MDP, the RVIA complexity is $\mathcal{O}\left(\hat{N}_{\text{RVIA}} |b_{\text{max}}|^4 |\Delta_{\text{max}}|^2 \right)$. 
	
	\begin{algorithm}[!t]
		\caption{Iterative triangle bisection approach to solve CMDP problem with two constraints}
		\small
		\label{alg_CMDP}
		\begin{algorithmic}[1]
			\Require {System parameters $q$, $\alpha_{\text{min}}$, $\alpha_{\text{max}}$, $\Delta_{\text{max}}$, $\varepsilon_{\lambda}$ and a set of four initial points $\bm{\lambda}_A$, $\bm{\lambda}_B$, $\bm{\lambda}_C$, and $\bm{\lambda}_D$ such that $\begin{bsmallmatrix} c_0(\pi_{\bm{\lambda}_A}^\ast) \geq 0 \\ c_1(\pi_{{\bm{\lambda}}_A}^\ast) \geq 0 \end{bsmallmatrix}$,  $\begin{bsmallmatrix} c_0(\pi_{\bm{\lambda}_B}^\ast) \leq 0 \\ c_1(\pi_{\bm{\lambda}_B}^\ast) \leq 0 \end{bsmallmatrix}$, $\begin{bsmallmatrix} c_0(\pi_{\bm{\lambda}_C}^\ast) \geq 0 \\ c_1(\pi_{\bm{\lambda}_C}^\ast) \leq 0 \end{bsmallmatrix}$, and $\begin{bsmallmatrix} c_0(\pi_{\bm{\lambda}_D}^\ast) \leq 0 \\ c_1(\pi_{\bm{\lambda}_D}^\ast) \geq 0 \end{bsmallmatrix}$.}
			\State {Initialize the triangles: $R \leftarrow \mathit{\Delta}\bm{\lambda}_A\bm{\lambda}_D\bm{\lambda}_C$, $S \leftarrow \mathit{\Delta}\bm{\lambda}_D\bm{\lambda}_B\bm{\lambda}_C$. 
				\State Initialize the $\bm{\lambda}$ vector: $\bm{\lambda}_E^{\text{new}} \leftarrow \bm{\lambda}_D$, $\bm{\lambda}_E^{\text{old}} \leftarrow \begin{bsmallmatrix} \inf \\ \inf \end{bsmallmatrix}$.}
			
			\While{$|\bm{\lambda}_E^{\text{new}}\!-\bm{\lambda}_E^{\text{old}}| \!\geq\! \varepsilon_{\lambda}$} \Comment{{\footnotesize \textit{External loop: Triangle bisection}}}
			\State {Run $\texttt{RVIA}(\bm{\lambda})$ at the vertices of $R$ ($\bm{\lambda}_{RA}$, $\bm{\lambda}_{RB}$, and $\bm{\lambda}_{RC}$) to obtain $\bm{\lambda}$-optimal policies: $\pi_{\bm{\lambda}_{RA}}^\ast \leftarrow \texttt{RVIA}({\bm{\lambda}_{RA}})$, $\pi_{\bm{\lambda}_{RB}}^\ast \leftarrow \texttt{RVIA}({\bm{\lambda}_{RB}})$, and $\pi_{\bm{\lambda}_{RC}}^\ast \leftarrow \texttt{RVIA}({\bm{\lambda}_{RC}})$.} 
			\State Calculate $\bm{F}_{\bm{\lambda}}=\begin{bsmallmatrix} c_0(\pi_{\bm{\lambda}}^\ast) \\ c_1(\pi_{\bm{\lambda}}^\ast) \end{bsmallmatrix}$ at the vertices of $R$ to obtain triangle $\mathit{\Delta}\bm{F}_{\bm{\lambda}_{RA}}\bm{F}_{\bm{\lambda}_{RB}}\bm{F}_{\bm{\lambda}_{RC}}$ 
			\If {$\left(\bm{0} \text{  lies within triangle  } \mathit{\Delta}\bm{F}_{\bm{\lambda}_{RA}}\bm{F}_{\bm{\lambda}_{RB}}\bm{F}_{\bm{\lambda}_{RC}}\right)$} 
			$\ T \leftarrow R$ \Else $\ T \leftarrow S$ \EndIf
			\State Rotate $T$ such that its first edge ($\bm{\lambda}_{TA}\bm{\lambda}_{TB}$) becomes the longest edge of triangle $T$.
			\State Update $\mathit{\Delta}\bm{\lambda}_A\bm{\lambda}_B\bm{\lambda}_C \leftarrow T$
			\State Update $\bm{\lambda}_E^{\text{old}} \!\leftarrow\! \bm{\lambda}_E^{\text{new}}$, $\bm{\lambda}_E^{\text{new}} \!\leftarrow\! \frac{\bm{\lambda}_A+\bm{\lambda}_B+\bm{\lambda}_C}{3}$, and $\bm{\lambda}_D \!\leftarrow\! \frac{\bm{\lambda}_A+\bm{\lambda}_B}{2}$
			\State Update the triangles: $R \leftarrow \mathit{\Delta}\bm{\lambda}_A\bm{\lambda}_D\bm{\lambda}_C$, $S \leftarrow \mathit{\Delta}\bm{\lambda}_D\bm{\lambda}_B\bm{\lambda}_C$.
			\EndWhile
			\State $\bm{\lambda}^\ast \leftarrow \bm{\lambda}_E^{\text{new}}$ and $\pi_{\bm{\lambda}}^\ast \leftarrow \texttt{RVIA}({\bm{\lambda}}^\ast$) \\
			\Return $\bm{\lambda}^\ast$ and $\pi_{\bm{\lambda}}^\ast$
		\end{algorithmic}  
	\end{algorithm}
	
	\begin{algorithm}[!t]
		\caption{\texttt{RVIA}(${\bm{\lambda}}$) function}
		\label{alg_RVIA}
		{\small
			\begin{algorithmic}[1]
				\Require State space $S$, parameters $q$, $\varepsilon_V$, and the input $\bm{\lambda}$
				\State Initialize $ V_0(s)=v_0(s)=0,\ \forall s \in S $, and set $t=0$.
				\Repeat \Comment{{\footnotesize \textit{Inner loop: RVIA iteration}}}
				\State $ t\leftarrow{t+1} $
				\For{$ s \in S $}
				\State $ {\displaystyle v_t(s)=\min_{a\in\left\{0,1\right\}}{\sum_{s^\prime\in S}{P\left(s^\prime|s,a\right)\left[C_{\bm{\lambda}}(s,a)+V_{t-1}(s^\prime)\right]}} }$ 
				\State $ {\displaystyle \pi_{\bm{\lambda}}^\ast(s)=\argmin_{a\in\left\{0,1\right\}}{\sum_{s^\prime\in S}{P\left(s^\prime|s,a\right)\left[C_{\bm{\lambda}}(s,a)+V_{t-1}(s^\prime)\right]}} }$ 
				\State $V_t(s)=v_t(s)-v_t(s_0)$ \Comment {{\footnotesize\textit{$s_0$: any arbitrary state}}}
				\EndFor
				\Until{$ {\displaystyle \max_{s\in S}\left\{V_t(s)\!-\!V_{t-1}(s)\right\}\!-\!\min_{s\in S}\left\{V_t(s)\!-\!V_{t-1}(s)\right\} \!<\! \varepsilon_V } $}  \\
				\Return $ \pi_{\bm{\lambda}}^\ast$
			\end{algorithmic}
		}
	\end{algorithm}

	\section{Numerical Results}
	\label{NumericalResults}
	In this section, we evaluate the performance of the proposed token-based policy through simulation results. We execute our algorithms on MATLAB over $2\times10^4$ time slots and average them over $400$ runs. All numerical results are obtained using a standard laptop with an Intel(R) Core(TM) i7-1355U 1.7 GHz processor and 32 GB of RAM.
	
	\subsection{Single Rate Constrained Problem}
	\label{SingleRateNumericalSection}
	We compare the results of the optimal policies for the primary CMDP problem \eqref{CMDP_eqn}, and the corresponding token-based MDP problem \eqref{UncMDP_eqn}. The first policy is derived through the \emph{optimal threshold finder} presented in \cite{maatouk2020age}, while the second policy is obtained using the RVIA algorithm. In Figs. \eqref{fig:AvgVsAlpha} and \eqref{fig:AvgVsPr} the average AoII for both policies is depicted for various sets of system parameters. The optimal average AoII for various values of $\alpha$ has been illustrated in Fig. \ref{fig:AvgVsAlpha}, with $p_R=0.5$ and $N=8$. In this figure, we have modified the value of $b_{\text{max}}$ from $5$ to $20$. As evident, the optimal AoII for the token-based MDP problem converges to the optimal AoII of the main CMDP problem.
	In Fig. \ref{fig:AvgVsPr}, the optimal average AoII for two policies is depicted as a function of $p_R$, for $N=8$ and $\alpha=0.1$. The optimal average AoII for the token-based policy converges to the optimal average AoII of the main CMDP problem as the value of $b_{\text{max}}$ increases. In both figures, we considered $\Delta_{\text{max}} = 30$.
	\begin{figure}[hbt!]
		\centering
		\includegraphics[width=2.3in,trim={0cm 0cm 0cm 1.1cm}]{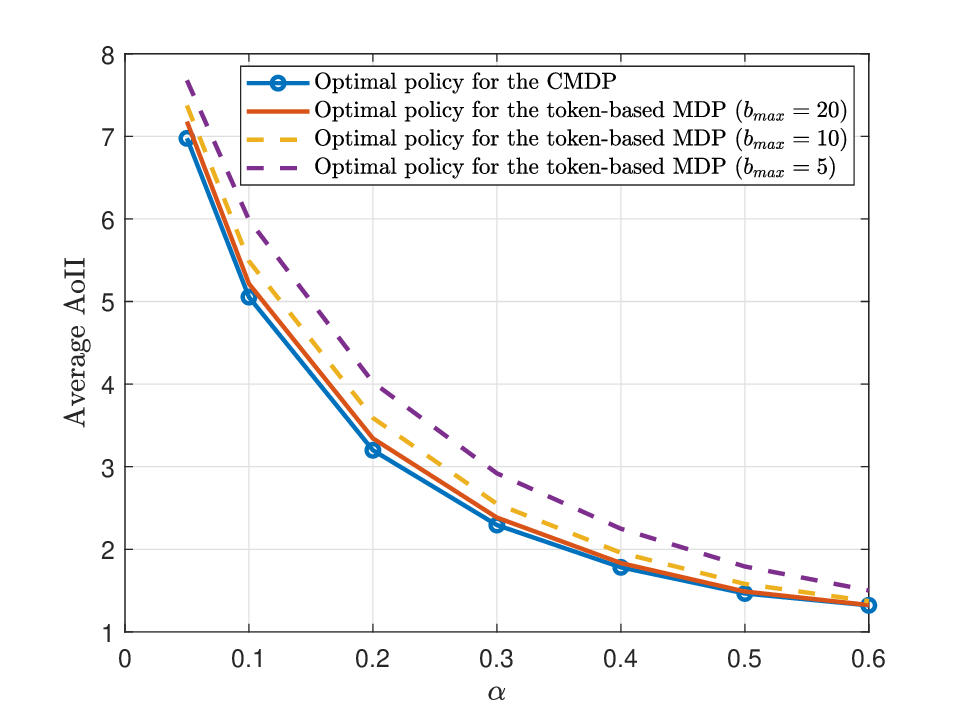}
		\caption{Optimal average AoII for two policies vs. $\alpha$ in \\the single rate constrained problem.}
		\label{fig:AvgVsAlpha}
		\vspace{1pt}
	\end{figure}
	\begin{figure}[hbt!]
		\centering
		\includegraphics[width=2.3in,trim={0cm 0cm 0cm 0.6cm}]{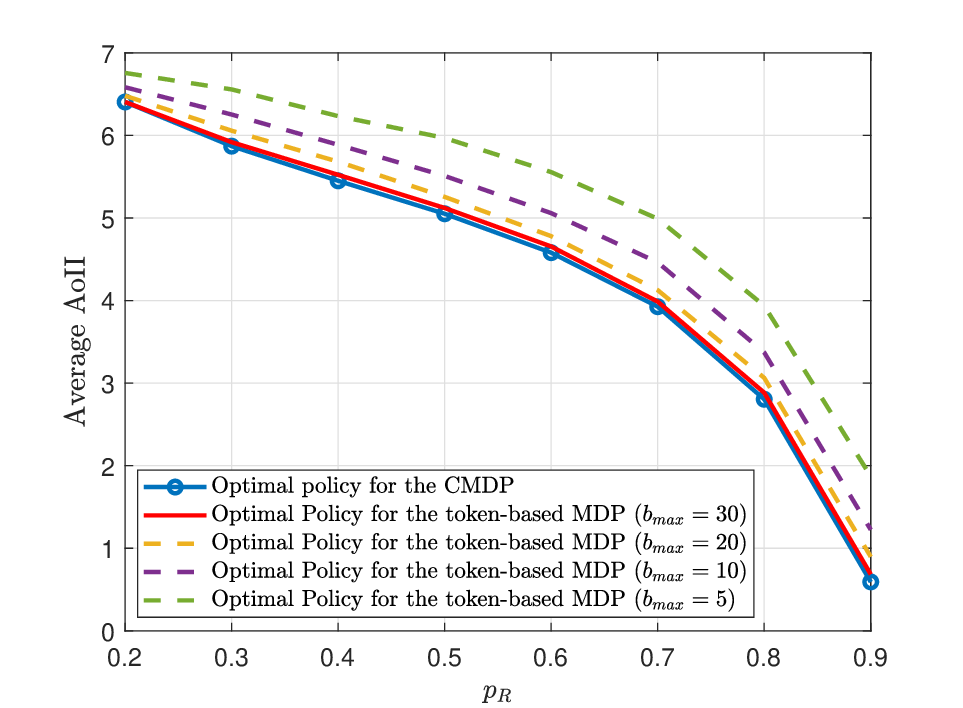}
		\caption{Optimal average AoII for two policies vs. $p_R$ in \\the single rate constrained problem.}
		\label{fig:AvgVsPr}
		\vspace{8pt}
	\end{figure}
	
	\subsection{Two Rate Constrained Problem}
	\label{TwoRateNumericalSection}
	
	We compare the optimal token-based policy of the MDP problem \eqref{UncMDP_eqn_2rate} with the optimal policy derived from the primary CMDP problem \eqref{CMDP_eqn_2rate}, along with two additional baseline policies.
	We utilize the RVIA algorithm to solve the token-based MDP problem and the iterative triangle bisection algorithm to solve the CMDP problem. Subsequently, we compare the resulting average AoI for these two policies across various system parameters $q$ (Fig. \ref{fig:AvgAoIVsq}) and $\alpha_{\text{max}}$ (Fig. \ref{fig:AvgAoIVsAlpha}), while keeping $\alpha_{\text{min}}$ fixed at $0.1$. In Figs. \ref{fig:AvgAoIVsq} and \ref{fig:AvgAoIVsAlpha}, we also present the average AoI resulting from a uniform two-rate policy and a random policy. By a uniform two-rate policy, we mean that when $r=0$, the system is updated at a rate of $\alpha_{\text{min}}$, and when $r=1$, the system is updated at a rate of $\alpha_{\text{max}}$. Whereas, according to a random policy, the system is updated with probability $\alpha_{\text{min}}$ ($\alpha_{\text{max}}$) when $r=0$ ($r=1$). $\Delta_{\text{max}}$, $\varepsilon_{\bm{\lambda}}$, and $\varepsilon_V$ have been set to $20$, $0.1$, and $0.1$, respectively. The results are summarized as follows:
	
	\begin{itemize}[leftmargin=0.12in]
		\item With an increase in $b_{\text{max}}$, the optimal average AoI for the token-based MDP problem converges to that of the primary CMDP problem. For further illustration, in Fig. \ref{fig:GapAoIVsbmax}, we plot the \emph{optimality gap} between the token-based policy and the optimal policy for the main CMDP problem as a function of $b_{\text{max}}$, where $\alpha_{\text{max}}=0.5$ and $q$ is either $0.2$ or $0.5$.
		\item Even for low levels of $b_{\text{max}}$ (e.g. $b_{\text{max}}=5$), the performance of the optimal token-based policy is close to the optimal CMDP policy, and it outperforms both the uniform and random policies. In Section \ref{ComplexitySection}, we observed that when $b_{\text{max}}$ is low, the RVIA's complexity in finding the optimal token-based policy may be preferable to the iterative approach for the primary CMDP. This is particularly significant because \emph{the token-based approach offers a straightforward formulation and solution without becoming entangled in the Lagrangian formulation and the intricate aspects of the iterative bisection algorithm for the primary CMDP problem.}
	\end{itemize}
	\vspace{-4pt}
	\begin{figure}[hbt!]
		\centering
		\includegraphics[width=2.3in,trim={0cm 0cm 0cm 1.3cm}]{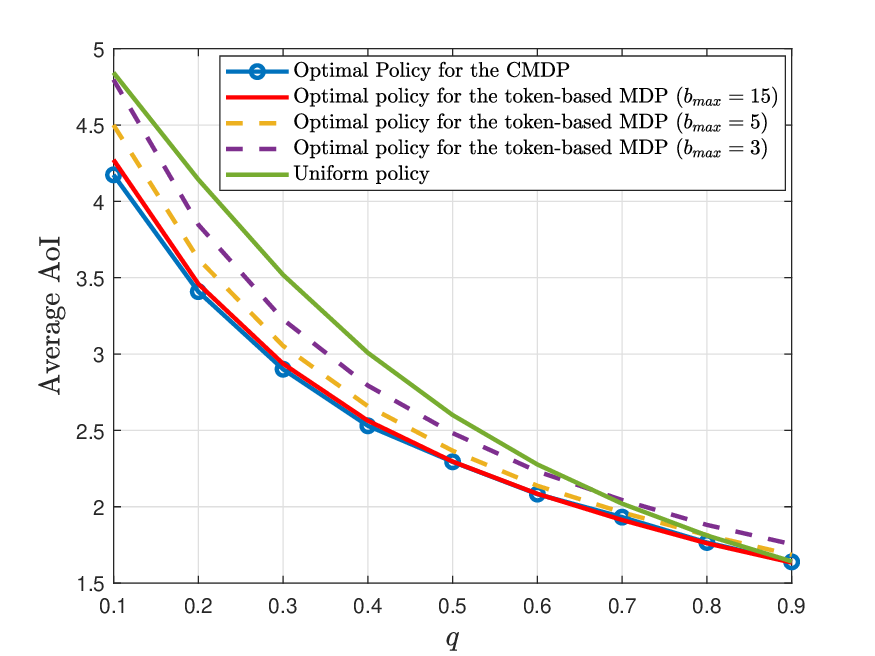}
		\caption{Average AoI for different policies vs. $q$ ($\alpha_{\text{max}}=0.5$) in \\the two rate constrained problem.}
		\label{fig:AvgAoIVsq}
	\end{figure}
	
	\begin{figure}[hbt!]
		\centering
		\includegraphics[width=2.3in,trim={0cm 0cm 0cm 1.2cm}]{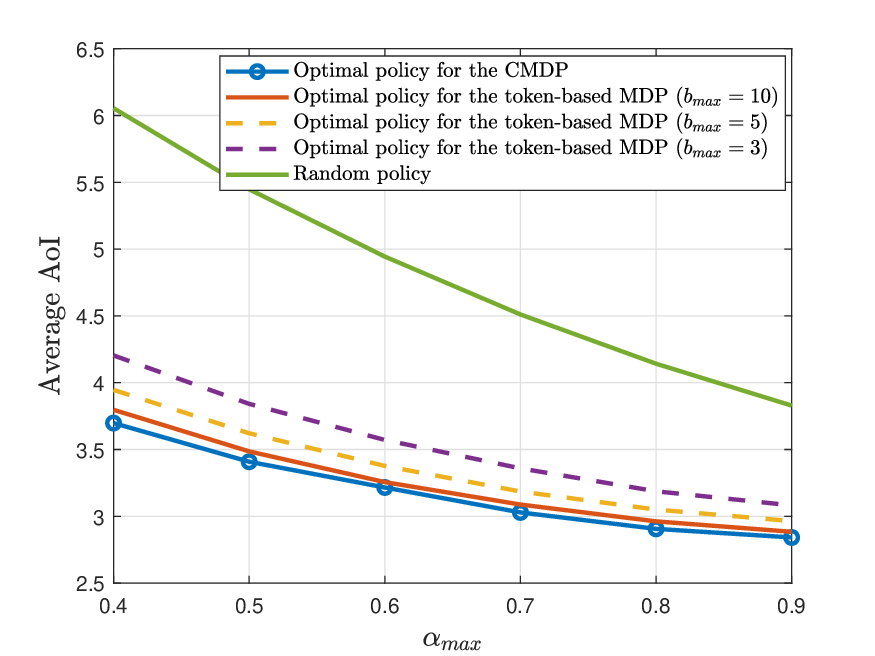}
		\caption{Average AoI for different policies vs. $\alpha_{\text{max}}$ ($q=0.2$) in \\the two rate constrained problem.}
		\label{fig:AvgAoIVsAlpha}
	\end{figure}
	
	\begin{figure}[hbt!]
		\centering
		\includegraphics[width=2.3in,trim={0cm 0cm 0cm 1.2cm}]{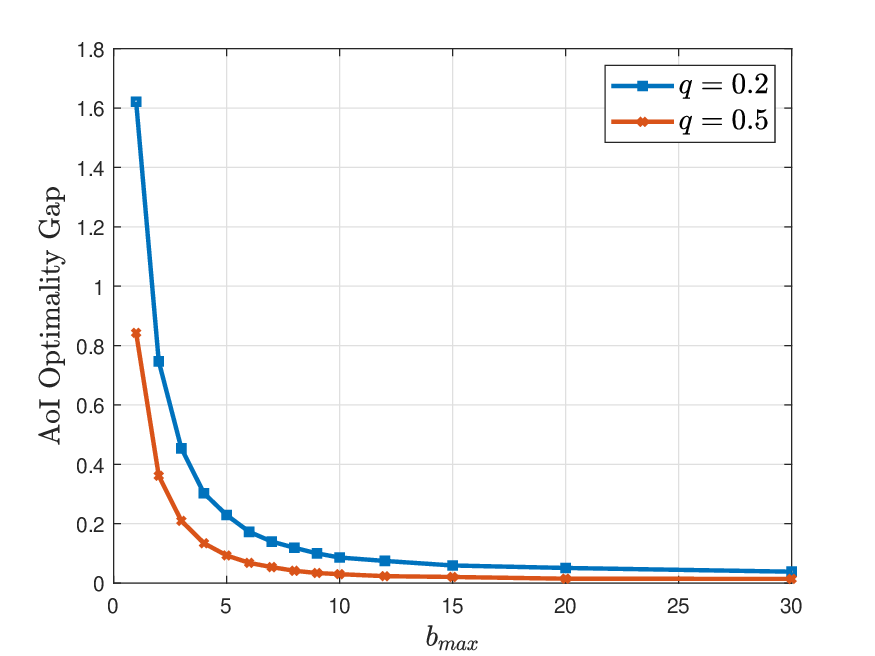}
		\caption{Optimality gap vs. $b_{\text{max}}$ in the two rate constrained problem.}
		\label{fig:GapAoIVsbmax}
		\vspace{-2pt}
	\end{figure}

	\section{Conclusion}
	\label{ConclusionSection}
	In this work, we explored a token-based approach for transforming CMDP problems into unconstrained MDP problems, explicitly focusing on optimizing information freshness in status update systems. We have applied this approach to systems with one and two update rate constraints and examined the analytical structure of the optimal token-based policy. Additionally, we have introduced an iterative triangle bisection algorithm for solving CMDP problems with two constraints. Our findings demonstrate the superior performance of the optimal policy compared to baseline policies and the convergence of the optimal token-based policy to the optimal policy of the main CMDP problem as the maximum number of tokens in the system increases.
	
	\bibliographystyle{IEEEtran}
	\bibliography{bibliography_short}

	\appendices
	
	\section{Proof of Theorem 1}
	\label{Apen1:Theorem1}
	\begin{proof}
		The Bellman equation at state $s\!=\!\left(b,\Delta\right)$ can be simplified as follows:
		\begin{align}
			J^\ast+V(s)&=\Delta+\min_{a\in\left\{0,1\right\}}{\left\{\sum_{s^\prime\in S} P\left[s^\prime|s,a\right]V(s^\prime)\right\}}.
		\end{align}
		
		Therefore, the optimal action can be obtained by:
		\begin{align*}
			a^\ast(s)\!=\!\argmin_{a\in\left\{0,1\right\}}{\left\{\sum_{s^\prime\in S}\!P\left[s^\prime|s,a\right]V(s^\prime)\right\}}\!=\!
			\begin{cases}
				0, & DV(s)\!\geq \!0,\\
				1, & DV(s)\!<\!0,\\
			\end{cases}
		\end{align*}
		where $V^0(s)\overset{\text{def}}{\!=\!}\sum_{s^\prime\in S}P\left[s^\prime|s,a\!=\!0\right]V(s^\prime)$, $V^1(s)\overset{\text{def}}{\!=\!}\sum_{s^\prime\in S}\!P\left[s^\prime|s,a\!=\!1\right]V(s^\prime)$, and $DV(s)\!\overset{\text{def}}{=}\!V^1(s)\!-\!V^0(s)$.
		
		As can be seen, the optimal action $a^\ast(s)$ is related to the sign of $DV(s)$. When $b=0$ or $\Delta=0$, it can be simply shown that $DV(s) = 0$; thus the action $a=0$ is optimal. For other cases where $b>0$ and $\Delta>0$, we have:
		\begin{subequations}
			\label{eqn:DV_1rate}
			\begin{align}
				V^0(s)& \!=\! \alpha p_t V(b\!+\!1,0) \!+\! \alpha \left(1\!-\!p_t\right) V(b\!+\!1,\Delta\!+\!1) \\ 
				& + (1-\alpha) p_t V(b,0) + (1-\alpha) \left(1-p_t\right) V(b,\Delta+1), \notag \\
				V^1(s)& = \alpha \beta V(b,0) + \alpha  \left(1-\beta\right) V(b,\Delta+1)  \\
				& + (1\!-\!\alpha) \beta V(b\!-\!1,0) + (1\!-\!\alpha) \left(1\!-\!\beta\right) V(b\!-\!1,\Delta\!+\!1), \notag
			\end{align}
		\end{subequations}
		
		In what follows, we demonstrate that $DV(s)=DV(b,\Delta)= V^1(s) - V^0(s)$ is a decreasing function of $\Delta$, i.e., for $\Delta\!^- \leq \Delta\!^+$, we show that $DV(b,\Delta\!^+) \leq DV(b,\Delta\!^-)$ or $DV(b,\Delta\!^+) - DV(b,\Delta\!^-) \leq 0$. By simplification of $DV(b,\Delta\!^+)$ and $DV(b,\Delta\!^-)$ based on \eqref{eqn:DV_1rate} we obtain: 
		\begin{align}
			\label{eqn:DV_diff}
			&DV(b,\Delta\!^+) - DV(b,\Delta\!^-) \notag \\
			& = \alpha \Big\{ (1\!-\!\beta) \big[V(b,\Delta\!^+ \!+\!1) \!-\! V(b,\Delta\!^- \!+\!1) \big] \notag \\
			& \quad -\!  (1\!-\!p_t) \big[ V(b\!+\!1,\Delta\!^+ \!+\!1) \!-\! V(b\!+\!1,\Delta\!^- \!+\!1) \big] \Big\} \notag \\
			& \quad + (1\!-\!\alpha) \Big\{ (1\!-\!\beta) \big[V(b\!-\!1,\Delta\!^+ \!+\!1) \!-\! V(b\!-\!1,\Delta\!^- \!+\!1) \big] \notag \\
			& \quad -\! (1\!-\!p_t) \big[ V(b,\Delta\!^+ \!+\!1) \!-\! V(b,\Delta\!^- \!+\!1) \big] \Big\}.
		\end{align}
		
		According to \eqref{eqn:DV_diff}, to confirm the inequality $DV(b,\Delta\!^+) - DV(b,\Delta\!^-) \leq 0$, it suffice to demonstrate that $(1-\beta) \big[V(b-1,\Delta\!^+) - V(b-1,\Delta\!^-) \big] - (1-p_t) \big[ V(b,\Delta\!^+) - V(b,\Delta\!^-) \big] \leq 0$, for $b>0$ and $1 < \Delta\!^- \leq \Delta\!^+$.
		We utilize the VIA and mathematical induction to proceed with the proof. VIA converges to the value function of Bellman's equation irrespective of the initial value assigned to $V_0(s)$, i.e., $\lim_{k\rightarrow\infty}{V_k(s)}=V(s)\ \forall s\in S$. Therefore, it suffices to establish the following inequality for all $k \in \{0,1,2,\cdots\}$:
		\begin{multline}
			\label{eqn:DV_diff_iter_k}
			(1-\beta) \big[V_k(b\!-\!1,\Delta\!^+) \!-\! V_k(b\!-\!1,\Delta\!^-) \big] \\
			\!-\! (1\!-\!p_t) \big[ V_k(b,\Delta\!^+) \!-\! V_k(b,\Delta\!^-) \big] \leq 0.
		\end{multline}
		
		Assuming $V_0(s) = 0$ for all $s \in S$, \eqref{eqn:DV_diff_iter_k} holds true for $k=0$. Now, with the same assumption extending up to $k > 0$, we prove its validity for $k+1$.
		By defining $V^0_{k+1}(s)=\sum_{s^\prime\in S}P\left[s^\prime|s,a=0\right]V_k(s^\prime)$, $V^1_{k+1}(s)\!=\!\sum_{s^\prime\in S}\!P\left[s^\prime|s,a\!=\!1\right]V_k(s^\prime)$, 
		
		\begin{subequations}
			\label{eqn:RVI_VoV1}
			\begin{align}
				V&^0_{k+1}(s) = \alpha p_t V_k(b\!+\!1,0) + \alpha \left(1\!-\!p_t\right) V_k(b\!+\!1,\Delta\!+\!1) \notag \\ 
				& + (1\!-\!\alpha) p_t V_k(b,0) + (1\!-\!\alpha) \left(1\!-\!p_t\right) V_k(b,\Delta\!+\!1), \\
				V&^1_{k+1}(s) = \alpha \beta V_k(b,0) + \alpha  \left(1\!-\!\beta\right) V_k(b,\Delta\!+\!1) \\
				& + (1\!-\!\alpha) \beta V_k(b\!-\!1,0) + (1\!-\!\alpha) \left(1\!-\!\beta\right) V_k(b\!-\!1,\Delta\!+\!1), \notag
			\end{align}
		\end{subequations}
		
		\noindent the VIA equation is given by $V_{k+1}(b,\Delta) = \Delta + \min \{V^0_{k+1}(b,\Delta),V^1_{k+1}(b,\Delta)\}$; thus the inequality \eqref{eqn:DV_diff_iter_k} for $k+1$ can further be simplified:
		\begin{align}
			\label{eqn:DV_diff_iter_k+1_sim}
			& \underbrace{(p_t-\beta)}_{< 0}\underbrace{(\Delta\!^+ - \Delta\!^-)}_{\geq 0} \notag \\
			&+ (1-\beta) \Big[ \min \{V^0_{k+1}(b-1,\Delta\!^+),V^1_{k+1}(b-1,\Delta\!^+)\} \notag \\
			&\qquad \qquad - \min \{V^0_{k+1}(b-1,\Delta\!^-),V^1_{k+1}(b-1,\Delta\!^-)\} \Big] \notag \\
			&- (1-p_t) \Big[ \min \{V^0_{k+1}(b,\Delta\!^+),V^1_{k+1}(b,\Delta\!^+)\} \notag \\
			&\qquad \qquad - \min \{V^0_{k+1}(b,\Delta\!^-),V^1_{k+1}(b,\Delta\!^-)\} \Big] \leq 0.
		\end{align}
		
		The first term in \eqref{eqn:DV_diff_iter_k+1_sim} is non-positive; thus, it is sufficient to demonstrate that the other terms, which we denote by $E_1$, are non-positive, i.e., $E_1 \leq 0$. To proceed with the proof, we consider four cases. 
		Case 1, where $V^0_{k+1}(b-1,\Delta\!^-) \leq V^1_{k+1}(b-1,\Delta\!^-)$ and $V^0_{k+1}(b,\Delta\!^+) \leq V^1_{k+1}(b,\Delta\!^+)$;
		case 2, where $V^0_{k+1}(b-1,\Delta\!^-) \leq V^1_{k+1}(b-1,\Delta\!^-)$ and $V^0_{k+1}(b,\Delta\!^+) > V^1_{k+1}(b,\Delta\!^+)$;
		cases 3 and 4 are defined by reversing the inequality signs in cases 1 and 2.
		We prove the inequality $E_1\leq 0$ for case 1; a similar approach can be utilized to prove the other cases. In this case, equation $E_1\leq 0$ is simplified:
		
		{\small
			\begin{align}
				&(1\!-\!\beta) \Big[ \min \{V^0_{k+1}(b\!-\!1,\Delta\!^+),V^1_{k+1}(b\!-\!1,\Delta\!^+)\} \!-\! V^0_{k+1}(b\!-\!1,\Delta\!^-) \Big] \notag \\
				&\quad - (1\!-\!p_t) \Big[ V^0_{k+1}(b,\Delta\!^+) \!-\! \min \{V^0_{k+1}(b,\Delta\!^-),V^1_{k+1}(b,\Delta\!^-)\} \Big] \leq 0 \notag \\
				& \underset{(a)}{\Leftrightarrow} (1\!-\!\beta) \Big[ V^0_{k+1}(b\!-\!1,\Delta\!^+) \!-\! V^0_{k+1}(b\!-\!1,\Delta\!^-) \Big] \! \notag \\
				&\qquad +\! (1\!-\!\beta) \min \{0,V^1_{k+1}(b\!-\!1,\Delta\!^+)\!-\!V^0_{k+1}(b\!-\!1,\Delta\!^+)\}  \notag \\
				&\qquad - (1\!-\!p_t) \Big[ V^0_{k+1}(b,\Delta\!^+) \!-\! V^0_{k+1}(b,\Delta\!^-)\Big] \notag \\
				&\qquad + (1\!-\!p_t) \min \{0,V^1_{k+1}(b,\Delta\!^-)\!-\!V^0_{k+1}(b,\Delta\!^-)\}   \leq 0,
			\end{align}
		}
		
		\noindent where $(a)$ is resulted from $\min\left\{x,y\right\}=x+\min\left\{0,y-x\right\}$. Since the second and last terms are negative (non-positive), it suffices to show that: 
		\begin{multline}
			(1-\beta) \Big[ V^0_{k+1}(b-1,\Delta\!^+) - V^0_{k+1}(b-1,\Delta\!^-) \Big] \\
			- (1-p_t) \Big[ V^0_{k+1}(b,\Delta\!^+) - V^0_{k+1}(b,\Delta\!^-)\Big]    \leq 0,
		\end{multline}
		where, according to \eqref{eqn:RVI_VoV1} and after some manipulation, it can be expressed as follows:
		
		\begin{align}
			& \alpha \left(1\!-\!p_t\right)  \Big\{ (1\!-\!\beta) \big[V_k(b,\Delta\!^+\!+\!1) \!-\! V_k(b,\Delta\!^-\!+\!1) \big] \notag \\
			& -\! (1\!-\!p_t) \big[V_k(b\!+\!1,\Delta\!^+\!+\!1) \!-\! V_k(b\!+\!1,\Delta\!^-\!+\!1) \big] \Big\} \notag \\
			& +\! (1\!-\!\alpha) \! \left(1\!-\!p_t\right) \! \Big\{\! (1\!-\!\beta) \big[\!V_k(b\!-\!1,\Delta\!^+\!+\!1) \!-\! V_k(b\!-\!1,\Delta\!^-\!+\!1)\big] \notag \\
			& -\! (1\!-\!p_t) \big[V_k(b,\Delta\!^+\!+\!1) \!-\! V_k(b,\Delta\!^-\!+\!1)\big] \Big\} \leq 0,
		\end{align}
		where both the expressions within the braces are negative according to \eqref{eqn:DV_diff_iter_k}, and the proof is complete.
	\end{proof}

\end{document}